%% file: main.tex
\documentclass[11pt]{article}

\usepackage[utf8]{inputenc}
\usepackage[T1]{fontenc}
\usepackage{amsmath,amssymb,amsthm}
\usepackage{booktabs}
\usepackage{float}
\usepackage{graphicx}
\usepackage{geometry}
\usepackage[hidelinks]{hyperref}
\usepackage[numbers,sort&compress]{natbib}
\usepackage{placeins}
\usepackage{xcolor}

\graphicspath{{../results/figures/}{./}}
\input{experiment_macros}

\newtheorem{definition}{Definition}
\newtheorem{proposition}{Proposition}

\title{Epistemic Horizon Minority Games:\\
When Abundance Reduces Strategic Value}
\author{Faruk Alpay\thanks{Corresponding author: \texttt{alpay@lightcap.ai}} \quad Levent Sar{\i}o\u{g}lu\\[3pt]
\small Department of Computer Engineering, Bah\c{c}e\c{s}ehir University, Istanbul, T\"urkiye\\[-1pt]
\small \texttt{\{faruk.alpay, levent.sarioglu\}@bahcesehir.edu.tr}}
\date{}

\begin{document}
\maketitle

\begin{abstract}
Strategic value can fall when an option becomes visible. A route, signal, bet,
or opportunity may be attractive because few agents see it; public attention can
erase the advantage it reveals. We formalize this mechanism as an
epistemic-horizon minority game (EHMG), where agents have bounded observation
horizons, action-specific awareness, desire-biased utilities, and payoffs that
decline with crowding. The object is not a fixed congestion game with omitted
actions, but an awareness-transition game on a finite lattice. We prove
fixed-awareness potential-game reduction, finite
monotone awareness convergence, logit mean-field uniqueness under an explicit
norm condition, non-reducibility from static count-based congestion games, and
sensitivity bounds for nonlinear revelation. We separate the target price of
information from aggregate welfare loss, showing that they can coincide,
diverge, or recommend opposite disclosure policies. Private revelation, public
common revelation, and correlated group disclosure are modeled as distinct
signal structures with different equilibrium effects. Experiments regenerate
awareness sweeps, public visibility shocks, horizon--desire grids,
information-constrained Braess examples, disclosure optimization, minimum
harmful revelation, and counterfactual baselines isolating the epistemic
mechanism from ordinary full-awareness congestion. Strategic trace encodings
are evaluated as a controlled regime-recognition benchmark: raw trajectories,
Fourier summaries, recurrence and Gramian images, image bundles, local-filter
features, leakage probes, phase-scrambled controls, resolution and
recurrence-threshold sweeps, spectral carriers, and IAAFT matched-null
parameter-shift controls test whether trace encodings recover strategic
structure under robust nulls.
\end{abstract}

\section{Introduction}

In ordinary language, abundance suggests value. In strategic systems this
intuition often fails. If a bar is enjoyable only when few people attend, a
forecast that the bar will be empty can make it crowded. If a route is faster
only when few drivers take it, public route guidance can erase its advantage. If
a bet is attractive because a player wants a particular hand to win, the
player's own desire can leak into the perceived probability of success. In each
case, value is not attached to the object alone. It is attached to how many
agents see the object, how many desire it, and what each agent believes others
will do.

The El Farol bar problem was introduced by Arthur as a compact model of
inductive reasoning under bounded rationality \citep{arthur1994inductive,
arthur2022elfarol}. Challet and Zhang's minority game placed the same intuition
into a game form in which agents benefit from being on the less crowded side
\citep{challet1997minority,yeung2009minority}. Congestion games formalize the
general case in which an action's payoff is a decreasing function of the number
of users \citep{rosenthal1973congestion}. Braess' paradox shows that adding an
option to a decentralized network can make everyone worse off
\citep{braess1968paradoxon,nagurney2020braess}; informational Braess shows that
additional information about options can also be harmful
\citep{acemoglu2018informational}. These literatures capture one part of the
intuition: more access can reduce value.

The missing part is epistemic. Agents do not choose in the world as a whole;
they choose inside an observation horizon. A person walking through a forest can
be locally coherent while omitting most simultaneous events from the decision
model. A gambler can know that an event is random while still shifting a
borderline prediction toward the wanted outcome. A community can disagree about
an unobservable domain not merely because agents assign different probabilities,
but because their state spaces do not contain the same objects. Epistemic game
theory makes beliefs, higher-order beliefs, common knowledge, and unawareness
formal objects \citep{aumann1976agreeing,pacuit2025epistemic,
vanderschraaf2022common,heifetz2006unawareness}. Behavioral decision theory
adds the empirical fact that desire and risk framing can distort judgment
\citep{kahneman1979prospect,tversky1971smallnumbers,windschitl2010desirability,
croson2005gambler}.

This paper combines these two sides. We introduce an \emph{epistemic horizon
minority game} (EHMG), a repeated negative-frequency game in which each agent
has a bounded memory horizon and a possibly incomplete action set. An action can
be physically available but absent from an agent's awareness. When the action is
revealed, it may gain demand because it is newly visible, but lose payoff
because its prior advantage was epistemic scarcity.

\paragraph{Contributions.}
\begin{itemize}
  \item We define EHMGs with awareness sets, observation horizons,
  desirability-biased perceived utilities, and realized payoffs that decline
  with action crowding.
  \item We introduce an awareness lattice and transition operators for private,
  public, and correlated revelation; this yields a mean-field awareness
  equilibrium object and finite convergence result for monotone awareness
  expansion.
  \item We prove a theorem family for visibility: fixed-awareness EHMGs reduce
  to potential games, but awareness transitions create non-reducible comparative
  statics; under logit contraction the mean-field equilibrium is unique and its
  price of information is bounded for nonlinear congestion.
  \item We embed the Braess comparison inside EHMG as an
  information-constrained Wardrop instance rather than using it only as an
  analogy.
  \item We define algorithmic disclosure tasks, including budgeted
  welfare-safe disclosure and minimum harmful revelation, and give exact,
  greedy, oracle, and treewidth-oriented computational baselines.
  \item We materialize counterfactual experiments against ordinary congestion,
  full-awareness minority dynamics, partial revelation, and no-scarcity-bonus
  controls.
  \item We adapt a reusable algorithmic layer for strategic-state oracle
  bounds, compatible-coalition hardness witnesses, bounded-dependence mechanism
  kernelization, policy surprise, reduced visibility dynamics, and binary
  belief repair.
  \item We evaluate trace-image encodings, phase controls, parameter-shift
  splits, IAAFT marginal/spectrum-matched nulls, raw, Fourier,
  recurrence-image, ROCKET-style time-series, and lightweight local-filter
  baselines, plus an external-trace adapter.
\end{itemize}

\section{Related Work}

\paragraph{El Farol, minority games, and congestion.}
Arthur's El Farol problem uses a simple attendance threshold to show why a
single deductive forecast can self-negate \citep{arthur1994inductive}. The
minority game abstracts this into a repeated game in which success comes from
being in the smaller group \citep{challet1997minority,yeung2009minority}.
Rosenthal's congestion games show that finite games with resource congestion
have pure Nash equilibria via a potential function
\citep{rosenthal1973congestion}. EHMGs inherit this negative-frequency logic but
make action awareness and observation horizon explicit state variables.

\paragraph{More options, worse outcomes.}
Braess' paradox is the canonical warning that adding capacity or route choice can
harm decentralized users \citep{braess1968paradoxon,nagurney2020braess}. The
informational Braess paradox is closer to the present paper: users with expanded
information sets can experience higher equilibrium costs
\citep{acemoglu2018informational}. Our model is not a replacement for network
Wardrop analysis. It is a compact agent-based counterpart for settings where
the newly visible object is not just a road, but an action, opportunity, or
belief target.

\paragraph{Bounded reasoning and desire.}
Human play often departs from exact equilibrium. Cognitive hierarchy models
allow agents to reason at finite depths \citep{camerer2004cognitive}; quantal
response equilibrium replaces sharp best responses with noisy probabilistic
responses \citep{mckelvey1995qre,goeree2018qre}. Prospect theory, the law of
small numbers, gambler's fallacy, and desirability bias show why a wanted
outcome can affect probability judgment and prediction thresholds
\citep{kahneman1979prospect,tversky1971smallnumbers,windschitl2010desirability,
croson2005gambler}. EHMG uses these ideas modestly: desire is an additive
preference term in perceived utility, not a claim that all probability
distortion has one cause.

\paragraph{Epistemic limits.}
Epistemic models of games study how rationality, belief, and higher-order
belief support solution concepts \citep{pacuit2025epistemic}. Common knowledge is
central to coordination \citep{vanderschraaf2022common}, while Aumann's theorem
shows how strong common-prior and common-knowledge assumptions constrain
disagreement \citep{aumann1976agreeing}. Unawareness models go further:
players may fail to represent actions, players, or states at all
\citep{heifetz2006unawareness}. EHMG uses this distinction to separate
uncertainty about an outcome from absence of the outcome from an agent's model.

\paragraph{Visual encodings of dynamics.}
Time-series imaging maps one-dimensional traces into spatial objects that can be
read by visual inspection or image-analysis pipelines. Gramian angular fields
and related maps were introduced for time-series classification and imputation
\citep{wang2015imaging}; recurrence plots visualize repeated states in a
dynamical system \citep{eckmann1987recurrence}; and Fourier phase is a classical
control variable for local image structure \citep{oppenheim1981phase}. We use
these tools conservatively: they are diagnostics for strategic trajectories, not
a substitute for the formal model.

\paragraph{Disclosure optimization.}
Once visibility becomes a decision variable, the model touches standard
algorithmic questions: choose which information to reveal, under what budget,
and with what welfare or harm guarantees. The minimum harmful revelation task
below is a covering problem in disguise, so the usual set-cover hardness
phenomena are relevant \citep{karp1972reducibility,feige1998setcover}. This
connection makes the algorithmic layer part of the EHMG core rather than a
separate collection of complexity proxies.

\section{Model}

Let \(N=\{1,\ldots,n\}\) be agents and \(A=\{1,\ldots,K\}\) actions. At round
\(t\), agent \(i\) is aware of a nonempty subset \(A_i(t)\subseteq A\). A public
visibility shock is an update \(A_i(t+1)\supseteq A_i(t)\) for a set of agents.
We distinguish three information structures. A \emph{private expansion} reveals
an action to selected agents without making the event common knowledge. A
\emph{public revelation} makes the action visible to all agents and makes that
visibility itself public. A \emph{correlated disclosure} reveals to a subset
whose membership is statistically related, for example through a platform,
community, or recommendation channel. These structures can have different
strategic effects even when the number of newly aware agents is the same.
Formally, agent \(i\)'s perceived awareness share for action \(a\) is
\(\widehat\pi_{i,a}\), a belief about how many other agents have \(a\) in their
awareness sets. Private revelation changes \(A_i\) but need not change
\(\widehat\pi_{j,a}\) for \(j\ne i\). Public revelation sets
\(\widehat\pi_{i,a}=1\) for all agents and makes that event common knowledge.
Correlated revelation gives agents a signal cell \(C_i\) and sets
\(\widehat\pi_{i,a}=\mathbb{E}[|S|/n\mid C_i]\). Thus ``I know \(a\)'',
``I know others know \(a\)'', and ``it is common knowledge that \(a\) is
available'' are different EHMG states.

\begin{proposition}[Correlated disclosure can differ at fixed reach]
Fix a focal action \(r\) and a recipient set \(S\) of size \(q\). Compare two
disclosures that reveal \(r\) to exactly the same agents in \(S\). In the first,
the recipient set is publicly announced, so each recipient uses
\(\widehat\pi^P_{i,r}=q/n\). In the second, recipients observe only a correlated
group signal \(C_i\) and use posterior \(\widehat\pi^C_{i,r}=p\). If
\(\eta>0\), \(p\ne q/n\), and all other perceived-utility terms are held fixed,
then the logit odds of choosing \(r\) differ by
\[
  \frac{\Pr_C(a_i=r)/\Pr_C(a_i=b)}
       {\Pr_P(a_i=r)/\Pr_P(a_i=b)}
  =
  \exp\!\left(\frac{\eta(q/n-p)}{\tau}\right)
\]
against any common alternative \(b\in A_i\). Hence two disclosures with the same
number of newly aware agents can induce different adoption, target
price-of-information, and welfare effects.
\end{proposition}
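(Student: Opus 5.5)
The plan is to make explicit the perceived-utility form the statement presupposes, and then compute the logit odds ratio directly. The proposition uses \(\eta\) and \(\tau\) without a displayed utility, so I will fix the convention it implies. Agent \(i\) chooses from \(A_i\) with logit probabilities
\[
\Pr(a_i=a)=\frac{\exp(U_{i,a}/\tau)}{\sum_{c\in A_i}\exp(U_{i,c}/\tau)} .
\]
The perceived utility has the additive form
\[
U_{i,a}=R_{i,a}-\eta\,\widehat\pi_{i,a},
\]
where \(R_{i,a}\) collects all remaining terms: base value, desirability bias, and the horizon-based forecast. The term \(-\eta\widehat\pi_{i,a}\) is the anticipated crowding penalty, or equivalently the loss of the epistemic-scarcity bonus. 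Within the hypothesis ``all other perceived-utility terms are held fixed'', this is the only term through which the disclosure type enters.

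\textbf{Step 1.} Compute the odds of \(r\) against a common alternative \(b\in A_i\). The normalizing denominator cancels, giving
\[
\frac{\Pr(a_i=r)}{\Pr(a_i=b)}=\exp\!\big((U_{i,r}-U_{i,b})/\tau\big).
\]
The point to note is that the awareness set \(A_i\) is the same under both disclosures, since both reveal \(r\) to exactly the same agents in \(S\). The partition function therefore differs between the two regimes, but it drops out of the odds. This is why the statement is phrased in odds rather than in probabilities.

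\textbf{Step 2.} Take the ratio of the odds under \(C\) and under \(P\). The terms \(R_{i,r}\), \(R_{i,b}\), and \(\widehat\pi_{i,b}\) are identical across the two regimes, because \(b\) is not the revealed action and the other terms are held fixed. Everything cancels except the difference in the perceived share of \(r\):
\[
\frac{U^C_{i,r}-U^P_{i,r}}{\tau}=\frac{-\eta p+\eta q/n}{\tau}=\frac{\eta(q/n-p)}{\tau}.
\]
Exponentiating gives the displayed identity. Since \(\eta>0\), \(\tau>0\), and \(p\ne q/n\), the exponent is nonzero, so the ratio is not equal to \(1\). Hence each recipient's choice probability of \(r\) differs between the regimes. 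It is strictly higher under \(C\) exactly when \(p<q/n\), that is, when the correlated signal makes recipients underestimate how many others are aware of \(r\).

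\textbf{Step 3.} Draw the ``hence'' conclusion. Expected adoption of \(r\) is \(\sum_{i\in S}\Pr(a_i=r)\) plus the unaffected non-recipients. Each summand moves strictly in the same direction, so expected adoption differs. Realized payoffs decline in the count on \(r\), so the target price of information and the welfare effects inherit this difference. Here I will only claim that they \emph{can} differ, as the statement does. To show this I would exhibit a one-line instance, for example a linear congestion payoff in which welfare is strictly monotone in the adoption of \(r\); strict difference in adoption then yields strict difference in welfare.

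The main obstacle is Step 0, not the algebra. The result depends on the sign and additive placement of \(\eta\), and the excerpt does not display this. I would state the convention explicitly as the model's perceived utility. I would also remark that if the crowding term entered nonlinearly, say as \(\eta g(\widehat\pi)\), the same argument gives the exponent \(\eta\big(g(q/n)-g(p)\big)/\tau\), and the qualitative conclusion survives whenever \(g\) is strictly monotone.
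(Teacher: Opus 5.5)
Your proposal is correct and follows the paper's proof. Both disclosures give identical feasible sets, so only the awareness-share term in \(\widehat u_{i,r}\) differs, by \(\eta(q/n-p)\); the logit odds identity then gives the displayed ratio, and the ``hence'' clause rests, as in the paper, on \(r\)'s payoff and welfare depending on adoption. Your Step~0 convention is the model's own: the paper's perceived utility contains \(\eta(1-\widehat\pi_{i,a})\), which equals your \(-\eta\widehat\pi_{i,a}\) up to the action-independent constant \(\eta\) that cancels in the odds.
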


\begin{proof}
The two disclosures produce the same feasible action set for every agent, so
they differ only through the awareness-share term in perceived utility. The
correlated signal changes \(\widehat u_{i,r}\) relative to the public
announcement by \(\eta(q/n-p)\), while \(\widehat u_{i,b}\) is unchanged. The
displayed odds ratio is the standard logit odds identity. If \(p<q/n\), the
correlated signal makes \(r\) appear more scarce and raises its adoption odds;
if \(p>q/n\), it lowers them. Because \(r\)'s realized payoff and welfare
contribution depend on adoption, the induced equilibrium effects can differ
even though the set of newly aware agents is identical.
\end{proof}

Let \(a_i(t)\in A_i(t)\) be the chosen action, and let
\[
  n_a(t)=\sum_{i=1}^n \mathbf{1}\{a_i(t)=a\}
\]
be the crowd on action \(a\). The realized payoff to agent \(i\) is
\[
  u_i(t)=v_{a_i(t)}-c\frac{n_{a_i(t)}(t)}{n},
\]
where \(v_a\) is intrinsic value and \(c>0\) is the congestion coefficient.
The experiments also use the more general congestion family
\[
  u_i(t)=v_{a_i(t)}-c\left(\frac{n_{a_i(t)}(t)}{n}\right)^\rho,
  \qquad \rho>0,
\]
so the main mechanism is not tied to a linear cost curve.

Agents do not observe the full process. Each has a horizon \(h_i\). Let
\(\widehat n_{i,a}(t)\) be the mean realized count of action \(a\) in the last
\(h_i\) rounds. The perceived utility used for choice is
\[
  \widehat u_{i,a}(t)
  =
  v_a
  -c\frac{\widehat n_{i,a}(t)}{n}
  +\eta \bigl(1-\pi_a(t)\bigr)
  +\lambda_i d_{i,a},
  \qquad a\in A_i(t),
\]
where \(\pi_a(t)=n^{-1}\sum_i \mathbf{1}\{a\in A_i(t)\}\) is realized awareness
share. In the information-structure version, agent \(i\) replaces this term by
the perceived share \(\widehat\pi_{i,a}(t)\). The parameter
\(\eta\ge 0\) is an epistemic-scarcity bonus, and \(\lambda_i d_{i,a}\) is the
desire term. The desire term is reduced-form: it can represent a utility tilt,
attention toward an action, or a prediction-threshold shift, but it is not meant
to identify gambler's fallacy, optimism, arousal, and desirability bias as one
mechanism. Actions outside \(A_i(t)\) have perceived utility \(-\infty\).
Choices follow a quantal response:
\[
  \Pr(a_i(t)=a)
  =
  \frac{\exp(\widehat u_{i,a}(t)/\tau)}
       {\sum_{b\in A_i(t)}\exp(\widehat u_{i,b}(t)/\tau)}.
\]

\begin{definition}[Epistemic scarcity]
An action \(a\) has epistemic scarcity at time \(t\) when \(\pi_a(t)<1\). Its
scarcity is not physical unavailability, but absence from some agents' action
spaces.
\end{definition}

\begin{definition}[Inverse abundance]
An action exhibits inverse abundance on an interval if increasing its awareness
or adoption rate decreases its realized payoff.
\end{definition}

\subsection{Awareness Transitions}

Let
\[
  \mathcal{L}=\prod_{i=1}^n 2^A
\]
be the finite awareness lattice ordered by componentwise inclusion. An awareness
profile \(\mathcal{A}=(A_1,\ldots,A_n)\in\mathcal{L}\) records which actions are
conceivable to each agent. For a focal action \(r\), a private revelation to
agent set \(S\subseteq N\) is the join operator
\[
  R^r_S(\mathcal{A})_i =
  \begin{cases}
    A_i\cup\{r\}, & i\in S,\\
    A_i, & i\notin S.
  \end{cases}
\]
Public revelation is \(R^r_N\), interpreted as both universal visibility and
common observability of the revelation event. Correlated revelation is
\(R^r_S\) where \(S\) is drawn from, or constrained by, a platform/community
sigma-field. This separates the number of newly aware agents from the
information structure by which they become aware.

For fixed \(\mathcal{A}\), let \(F_\tau(x;\mathcal{A})\) be the population
logit response map induced by perceived utilities and congestion share \(x\).
A \emph{mean-field awareness equilibrium} is a pair
\((x^\star,\mathcal{A}^\star)\) such that
\[
  x^\star = F_\tau(x^\star;\mathcal{A}^\star),
  \qquad
  \Gamma(\mathcal{A}^\star,x^\star)=\mathcal{A}^\star,
\]
where \(\Gamma:\mathcal{L}\times\Delta(A)\to\mathcal{L}\) is an awareness
transition operator.

\begin{proposition}[Finite convergence of monotone awareness expansion]
Suppose \(\Gamma\) is inflationary:
\(\Gamma(\mathcal{A},x)\succeq\mathcal{A}\) for every awareness profile and
population state. Then every awareness trajectory
\(\mathcal{A}_{t+1}=\Gamma(\mathcal{A}_t,x_t)\) reaches an awareness fixed point
after at most
\[
  nK-\sum_{i=1}^n |A_i(0)|
\]
strict expansion events, regardless of the intermediate population states
\(x_t\).
\end{proposition}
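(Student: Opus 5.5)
The argument is a potential (rank) function on the finite lattice \(\mathcal{L}\). I will define \(\Phi(\mathcal{A})=\sum_{i=1}^n |A_i|\) for \(\mathcal{A}=(A_1,\ldots,A_n)\in\mathcal{L}\). Since each \(A_i\subseteq A\) and \(|A|=K\), this function satisfies \(0\le\Phi\le nK\), and it takes only integer values. Because \(\mathcal{L}\) is ordered by componentwise inclusion, \(\mathcal{A}\preceq\mathcal{B}\) implies \(\Phi(\mathcal{A})\le\Phi(\mathcal{B})\). Moreover, if \(\mathcal{A}\preceq\mathcal{B}\) and \(\mathcal{A}\neq\mathcal{B}\), then some component satisfies \(A_i\subsetneq B_i\), which forces \(\Phi(\mathcal{B})\ge\Phi(\mathcal{A})+1\).

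Next I apply the inflationary hypothesis along the trajectory. For every \(t\) it gives \(\mathcal{A}_{t+1}=\Gamma(\mathcal{A}_t,x_t)\succeq\mathcal{A}_t\), whatever the population state \(x_t\) is. I call step \(t\) a \emph{strict expansion event} when \(\mathcal{A}_{t+1}\ne\mathcal{A}_t\). By the previous paragraph, each such event raises \(\Phi\) by at least one, and every other step leaves \(\Phi\) unchanged. The sequence \(\Phi(\mathcal{A}_t)\) is therefore nondecreasing, integer-valued, and bounded above by \(nK\). Hence the number of strict expansion events is at most \(nK-\Phi(\mathcal{A}_0)=nK-\sum_i|A_i(0)|\). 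Nothing in this count uses any property of \(x_t\), which gives the ``regardless of intermediate population states'' clause.

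The delicate step is what ``reaches an awareness fixed point'' should mean, because \(\Gamma\) depends on \(x_t\). A profile with \(\mathcal{A}_{t+1}=\mathcal{A}_t\) at one time need not satisfy \(\Gamma(\mathcal{A}_t,x)=\mathcal{A}_t\) for all \(x\). I will argue in two steps:

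1. Since there are finitely many strict events, there is a time \(T\) after which no strict expansion occurs. So \(\mathcal{A}_t=\mathcal{A}_T\) for all \(t\ge T\), and \(\Gamma(\mathcal{A}_T,x_t)=\mathcal{A}_T\) for every subsequent \(x_t\). This means the trajectory is eventually constant at a profile that is fixed along the realized population path, which is the sense in which the mean-field definition's condition \(\Gamma(\mathcal{A}^\star,x^\star)=\mathcal{A}^\star\) can hold.
2. For a state-independent fixed point, I will also note the extreme case. If the bound is attained, then \(\mathcal{A}_T\) is the top element \((A,\ldots,A)\), and there \(\Gamma(\mathcal{A}_T,x)=\mathcal{A}_T\) holds for all \(x\), because inflationarity together with maximality forces equality.

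I expect the interpretation in step 1 to be the only real point to state carefully. The counting argument itself is routine.
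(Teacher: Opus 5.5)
Your proof is correct and is essentially the paper's argument: your rank $\Phi(\mathcal{A})=\sum_i|A_i|$ counts the present agent--action Boolean coordinates, and the paper likewise notes that inflationary updates only flip coordinates from absent to present, so the number of initially absent coordinates bounds the number of strict expansions. Your extra care about what ``fixed point'' means is a sound clarification of a point the paper's one-line conclusion glosses over: the trajectory is eventually constant, so $\Gamma(\mathcal{A}_T,x_t)=\mathcal{A}_T$ holds along the realized population path, and a state-independent fixed point is guaranteed only at the top element.
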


\begin{proof}
The lattice has one Boolean coordinate for each agent-action pair. Inflationary
updates can flip a coordinate only from absent to present and never back. The
displayed quantity is exactly the number of initially absent coordinates. Once
no coordinate flips, \(\Gamma\) has reached an awareness fixed point.
\end{proof}

\begin{proposition}[Existence of mean-field awareness equilibrium]
Suppose \(\Gamma\) is inflationary and eventually stationary on every monotone
awareness path. For every terminal awareness profile \(\mathcal{A}\), if
\(F_\tau(\cdot;\mathcal{A})\) is continuous on the action simplex, then there
exists a mean-field awareness equilibrium
\((x^\star,\mathcal{A}^\star)\).
\end{proposition}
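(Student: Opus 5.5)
The plan is to build the equilibrium in two stages: first fix the awareness profile using the lattice structure, then solve for the population state with Brouwer's theorem.

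\emph{Stage one: find a terminal awareness profile.} Start from an arbitrary initial profile \(\mathcal{A}_0\in\mathcal{L}\) and an arbitrary population state \(x_0\in\Delta(A)\). Generate the trajectory \(\mathcal{A}_{t+1}=\Gamma(\mathcal{A}_t,x_t)\), with the \(x_t\) chosen in any way, for instance \(x_{t+1}=F_\tau(x_t;\mathcal{A}_{t+1})\). Because \(\Gamma\) is inflationary, the preceding Proposition on finite convergence of monotone awareness expansion applies. The trajectory undergoes at most \(nK-\sum_i|A_i(0)|\) strict expansions, so after finitely many steps it stays constant at some \(\mathcal{A}^\star\).

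\emph{Stage two: fix \(\mathcal{A}^\star\) and find \(x^\star\).} Consider the map \(x\mapsto F_\tau(x;\mathcal{A}^\star)\) on the simplex \(\Delta(A)\). This set is nonempty, compact and convex, and the map takes values in it because logit responses are probability vectors. By hypothesis the map is continuous. Brouwer's fixed point theorem therefore gives \(x^\star\) with \(x^\star=F_\tau(x^\star;\mathcal{A}^\star)\).

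\emph{Closing the loop.} It remains to show \(\Gamma(\mathcal{A}^\star,x^\star)=\mathcal{A}^\star\). This is the main obstacle. Stationarity of the trajectory was obtained along the particular sequence \(x_t\), and that does not by itself give stationarity at the new state \(x^\star\).

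I would handle this by reading the hypothesis that \(\Gamma\) is ``eventually stationary on every monotone awareness path'' as \emph{uniform in the population state}. That is, a terminal profile is one with \(\Gamma(\mathcal{A}^\star,x)=\mathcal{A}^\star\) for all \(x\), and the statement's phrase ``for every terminal awareness profile'' refers to such profiles. Under this reading, Stage two immediately yields the pair \((x^\star,\mathcal{A}^\star)\).

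If only the weaker, path-wise stationarity is available, I would iterate instead. Compute \(x^\star\) for \(\mathcal{A}^\star\). If \(\Gamma(\mathcal{A}^\star,x^\star)\neq\mathcal{A}^\star\), the inflationary property forces a strict expansion; move to the new profile and recompute the Brouwer fixed point there. Each failure strictly increases the number of present agent-action coordinates, so the bound from the finite-convergence Proposition caps the number of restarts. The process must therefore end at a profile with \(\Gamma(\mathcal{A}^\star,x^\star)=\mathcal{A}^\star\) and \(x^\star=F_\tau(x^\star;\mathcal{A}^\star)\), where continuity of \(F_\tau\) at every profile reached supplies each Brouwer step. This iterated version avoids needing the uniform reading at all, so I would present it as the main argument.
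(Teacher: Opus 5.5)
Your Stage one and Stage two under the uniform reading are exactly the paper's proof. The paper invokes finite convergence to get a terminal \(\mathcal{A}^\star\) with \(\Gamma(\mathcal{A}^\star,x)=\mathcal{A}^\star\) ``for the relevant terminal states'' and then applies Brouwer. It leaves implicit the closing-the-loop step you flag.

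Your iterated argument is a genuinely more careful route, and it is correct. Alternating a Brouwer step with one application of \(\Gamma\) is itself a trajectory \(\mathcal{A}_{t+1}=\Gamma(\mathcal{A}_t,x_t)\), with \(x_t\) a chosen fixed point of \(F_\tau(\cdot;\mathcal{A}_t)\). So the finite-convergence proposition halts it after at most \(nK-\sum_i|A_i(0)|\) strict expansions, at a pair satisfying both equilibrium conditions.

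What the iteration buys is an equilibrium reached by the dynamics from any prescribed starting profile, and lying above it. With it, your Stage one is redundant: just start the iteration at \(\mathcal{A}_0\). The ``eventually stationary'' hypothesis is also redundant, since inflationarity on a finite lattice already implies it. The cost is that each Brouwer step needs continuity of \(F_\tau(\cdot;\mathcal{A})\) at intermediate, non-terminal profiles, which is slightly beyond the literal hypothesis. This does hold for every mask in the masked-logit model, as the paper's proof itself notes.

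If you want bare existence using only the literal hypothesis, there is a shorter route. For the top profile \(\top=(A,\ldots,A)\), inflationarity gives \(\Gamma(\top,x)\succeq\top\), hence \(\Gamma(\top,x)=\top\) for every \(x\). So \(\top\) is terminal in the strongest sense, and Brouwer at \(\top\) already yields the equilibrium \((x^\star,\top)\), though it is only the uninformative full-awareness one.
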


\begin{proof}
By finite convergence, every monotone awareness path reaches some terminal
profile \(\mathcal{A}^\star\) with
\(\Gamma(\mathcal{A}^\star,x)=\mathcal{A}^\star\) for the relevant terminal
states. The simplex is compact and convex, and \(F_\tau\) is continuous because
logit probabilities are continuous in utilities. Brouwer's theorem gives
\(x^\star=F_\tau(x^\star;\mathcal{A}^\star)\).
\end{proof}

\begin{proposition}[Uniqueness under masked-logit contraction]
Fix an awareness profile and a finite collection of population types
\(\theta\), each with mass \(w_\theta\), awareness mask \(M_\theta\), and
masked-logit choice vector \(p_\theta(x)\). Let
\[
  F_\tau(x;\mathcal{A})=\sum_\theta w_\theta p_\theta(x)
\]
be the aggregate response map, where congestion enters utility through
\(-g_a(x_a)\). For each \(x\), let \(J_\theta(x)\) be the Jacobian of the
masked softmax with respect to feasible-action utilities and let
\(D(x)=\mathrm{diag}(g'_1(x_1),\ldots,g'_K(x_K))\), with unavailable
coordinates zeroed by \(M_\theta\). If
\[
  L=\sup_{x\in\Delta(A)}
  \left\|
    \sum_\theta w_\theta J_\theta(x)D(x)
  \right\|_\infty < 1,
\]
then \(F_\tau(\cdot;\mathcal{A})\) is a contraction in sup norm and has a
unique fixed point. A simple sufficient condition for
\(g_a(x_a)=c_a x_a^\rho\), \(\rho\ge 1\), is
\[
  \frac{\max_a c_a\rho}{2\tau}<1,
\]
because every masked-softmax Jacobian has \(\ell_\infty\) operator norm at most
\(1/(2\tau)\). If \(0<\rho<1\), the same statement holds on any region
\(x_a\ge\epsilon>0\) after replacing \(\max_a c_a\rho\) by
\(\max_a c_a\rho\epsilon^{\rho-1}\).
\end{proposition}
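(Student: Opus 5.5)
The plan is to treat \(F_\tau(\cdot;\mathcal{A})\) as a smooth self-map of the compact convex simplex \(\Delta(A)\) and apply the mean value inequality in the sup norm, followed by Banach's fixed point theorem. For each type \(\theta\), the masked-logit vector is \(p_\theta(x)=\sigma_\theta(u_\theta(x)/\tau)\). Here the feasible-action utilities are \(u_{\theta,a}(x)=\text{const}_{\theta,a}-g_a(x_a)\) for \(a\in M_\theta\), and infeasible coordinates of \(p_\theta\) are identically zero. I will absorb the factor \(1/\tau\) into \(J_\theta\), as the statement implicitly does by quoting the bound \(1/(2\tau)\). The chain rule then gives \(\partial p_\theta/\partial x = -J_\theta(x)D(x)\). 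Summing over types yields \(DF_\tau(x)=-\sum_\theta w_\theta J_\theta(x)D(x)\), with the masking handled by zeroing rows and columns outside \(M_\theta\). The sign is irrelevant to the norm.

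For any \(x,y\in\Delta(A)\), convexity puts the segment \(x+s(y-x)\) inside the simplex. The fundamental theorem of calculus then gives \(F(y)-F(x)=\int_0^1 DF(x+s(y-x))(y-x)\,ds\), so \(\|F(y)-F(x)\|_\infty\le L\|y-x\|_\infty\). Since \(F\) maps the simplex into itself (each \(p_\theta\) is a probability vector and \(\sum w_\theta=1\), assuming the masses are normalized) and the simplex is closed in \(\mathbb{R}^K\), Banach's theorem gives a unique fixed point when \(L<1\). The fixed point is also the limit of the iterates.

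For the sufficient condition, I will bound \(\|J_\theta\|_\infty\). On feasible coordinates the softmax Jacobian with respect to utilities is \(\tau^{-1}(\mathrm{diag}(p)-pp^\top)\). Row \(a\) has absolute row sum \(\tau^{-1}\bigl(p_a(1-p_a)+p_a\sum_{b\ne a}p_b\bigr)=2\tau^{-1}p_a(1-p_a)\le 1/(2\tau)\). This bound is the key elementary estimate, and it is also where I expect the only subtlety to lie: the claimed constant \(1/(2\tau)\) requires using the exact row sum \(2p_a(1-p_a)\) rather than a cruder bound such as \(2p_a\).

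Next I will use \(\|J D\|_\infty\le\|J\|_\infty\|D\|_\infty\) and the fact that \(\|D(x)\|_\infty=\max_a|g_a'(x_a)|\). Convexity of \(\|\cdot\|_\infty\) then gives \(L\le\sum_\theta w_\theta\cdot\frac{1}{2\tau}\sup_x\max_a|g'_a(x_a)|\). For \(g_a=c_ax^\rho\) with \(\rho\ge1\) and \(x_a\in[0,1]\), we have \(g'_a=c_a\rho x_a^{\rho-1}\le c_a\rho\), which yields \(\max_a c_a\rho/(2\tau)<1\).

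For \(0<\rho<1\), \(g'_a\) is unbounded near \(0\), so I will restrict to the closed convex set \(\Delta_\epsilon=\{x\in\Delta(A):x_a\ge\epsilon\}\). On this set \(g'_a\le c_a\rho\epsilon^{\rho-1}\), and the same argument applies. To apply Banach there I must also check that \(F\) maps \(\Delta_\epsilon\) into itself, or read the statement as a contraction estimate on that region. I will note that self-mapping holds when the logit probabilities are bounded below by \(\epsilon\) on feasible actions, which is automatic for full awareness and small enough \(\epsilon\) because utilities are bounded. This is the step where I would add an explicit hypothesis if needed.
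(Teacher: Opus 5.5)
Your proposal is correct and follows the paper's proof essentially step for step. Both arguments use the chain rule to get $\nabla_x F_\tau=-\sum_\theta w_\theta J_\theta D$, the mean-value bound in sup norm on the convex simplex, and Banach's theorem. Both obtain the scalar condition from the row-sum estimate $2p_a(1-p_a)/\tau\le 1/(2\tau)$ and use the same $\epsilon$-restricted derivative bound for $0<\rho<1$.

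Your extra checks on normalization of the masses and on self-mapping of $\{x_a\ge\epsilon\}$ go beyond the paper, which leaves both implicit. On the second point, uniqueness of a fixed point within that region needs only the Lipschitz estimate; self-mapping is required only for existence there.
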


\begin{proof}
For a fixed awareness mask, unavailable actions have zero choice probability
and zero utility derivative, so the usual softmax Jacobian applies on the
feasible face of the simplex. The chain rule gives
\[
  \nabla_xF_\tau(x;\mathcal{A})
  =
  -\sum_\theta w_\theta J_\theta(x)D(x).
\]
The displayed operator bound implies
\(\|F_\tau(x;\mathcal{A})-F_\tau(y;\mathcal{A})\|_\infty
\le L\|x-y\|_\infty\) by the mean-value theorem on the simplex. Banach's
fixed-point theorem gives uniqueness. For a softmax row \(p\), the absolute
row sum of its utility Jacobian is
\(2p_a(1-p_a)/\tau\le 1/(2\tau)\), yielding the scalar sufficient condition.
The \(\epsilon\)-restricted nonlinear case uses
\(\max_a g'_a(x_a)\le\max_a c_a\rho\epsilon^{\rho-1}\).
\end{proof}

This condition is deliberately sufficient, not necessary. It covers the
fixed-awareness logit phase, including heterogeneous masks and action-specific
congestion slopes, but it does not rule out multiple fixed points when
temperature is low, congestion is steep, or awareness itself is still moving.
Those cases are handled experimentally as sensitivity regimes rather than
claimed as globally unique equilibria.

\begin{proposition}[Sensitivity to revelation size]
Under the contraction condition above, let \(\mathcal{A}'=R^r_S(\mathcal{A})\)
reveal action \(r\) to \(q=|S\setminus\{i:r\in A_i\}|\) newly aware agents. If
the direct effect of changing the masks of these agents is at most \(q/n\) in
the aggregate logit response, then the fixed points satisfy
\[
  \|x^\star(\mathcal{A}')-x^\star(\mathcal{A})\|_\infty
  \le
  \frac{q/n}{1-L}.
\]
In particular, the revealed action's equilibrium share can rise by at most this
amount.
\end{proposition}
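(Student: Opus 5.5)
The argument is the standard perturbation bound for fixed points of contractions. Write \(F=F_\tau(\cdot;\mathcal{A})\) and \(F'=F_\tau(\cdot;\mathcal{A}')\), with fixed points \(x^\star=x^\star(\mathcal{A})\) and \(x'^\star=x^\star(\mathcal{A}')\). The first step is to check that the uniqueness proposition under masked-logit contraction applies to both profiles, so that both fixed points exist and are unique. The constant \(L\) is a supremum over \(x\) of a norm built from the mask-dependent Jacobians \(J_\theta\), and revealing \(r\) changes some masks. We therefore read ``under the contraction condition above'' as saying that the same \(L<1\) bounds the operator norm for both \(\mathcal{A}\) and \(\mathcal{A}'\). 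For the logit sufficient condition this is automatic, since the bound \(\max_a c_a\rho/(2\tau)\) holds for every mask.

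The core estimate splits the difference at \(x'^\star\) into a contraction part and a direct-effect part:
\[
  \|x'^\star-x^\star\|_\infty
  =\|F'(x'^\star)-F(x^\star)\|_\infty
  \le \|F'(x'^\star)-F(x'^\star)\|_\infty+\|F(x'^\star)-F(x^\star)\|_\infty .
\]
The second term is at most \(L\|x'^\star-x^\star\|_\infty\) by the contraction property of \(F\). The first term is the direct effect of changing the masks with the population state held fixed, and the hypothesis bounds it by \(q/n\). Rearranging gives \((1-L)\|x'^\star-x^\star\|_\infty\le q/n\), which is the claimed bound. The bound on the revealed action's share follows by taking the \(r\)-coordinate of the sup-norm bound.

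The main obstacle is making the direct-effect hypothesis meaningful, namely showing why \(q/n\) is the natural size of \(\sup_x\|F'(x)-F(x)\|_\infty\).

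\begin{itemize}
  \item Only the types of the \(q\) newly aware agents change their choice vectors \(p_\theta\), and those types carry total mass \(q/n\).
  \item Each changed agent's choice vector moves by at most \(1\) in each coordinate, since probabilities lie in \([0,1]\).
  \item Hence each coordinate of the aggregate \(\sum_\theta w_\theta p_\theta\) moves by at most \(q/n\).
  \item One caveat concerns the scarcity term \(\eta(1-\pi_r)\). Revelation changes \(\pi_r\) for every agent already aware of \(r\), which could add an indirect perturbation beyond the \(q\) new agents.
\end{itemize}

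Because of that caveat, I would keep the statement's explicit hypothesis, which bounds the whole direct effect by \(q/n\). I would then remark that in the per-agent masking model with fixed perceived shares, this hypothesis holds by the mass argument above.
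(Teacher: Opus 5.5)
Your proof is correct and is essentially the paper's argument. The only difference is where you pivot the triangle inequality: the paper writes \(\|x'^\star-x^\star\|_\infty\le\|F'(x'^\star)-F'(x^\star)\|_\infty+\|F'(x^\star)-F(x^\star)\|_\infty\), contracting the new map and evaluating the mask perturbation at the old fixed point, whereas you contract the old map and evaluate the perturbation at \(x'^\star\). That is why you need the direct-effect bound uniformly in \(x\), which your mass argument supplies. Your caveat that the scarcity term \(\eta(1-\pi_r)\) also perturbs agents already aware of \(r\) is a fair point that the paper's proof passes over, and keeping the explicit hypothesis is the right way to handle it.
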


\begin{proof}
Let \(F\) and \(F'\) be the two response maps. At the old fixed point
\(x^\star\), changing availability for \(q\) of \(n\) agents can change the
population-average response by at most \(q/n\). For the new fixed point
\(x'^\star\),
\[
\|x'^\star-x^\star\|_\infty
\le
\|F'(x'^\star)-F'(x^\star)\|_\infty
+\|F'(x^\star)-F(x^\star)\|_\infty
\le
L\|x'^\star-x^\star\|_\infty+q/n.
\]
Rearranging proves the bound.
\end{proof}

\begin{proposition}[Nonlinear price-of-information bound and tightness]
Let \(g_r(x)=c x^\rho\), and suppose revelation raises the equilibrium share of
the focal action from \(x\) to \(x+\delta\). The target price of information is
\[
  P_r=c\bigl((x+\delta)^\rho-x^\rho\bigr).
\]
For \(\rho\ge 1\),
\[
  c\rho x^{\rho-1}\delta
  \le
  P_r
  \le
  c\rho (x+\delta)^{\rho-1}\delta.
\]
Combining with the sensitivity proposition gives
\[
  P_r
  \le
  c\rho\left(x+\frac{q/n}{1-L}\right)^{\rho-1}
  \frac{q/n}{1-L}.
\]
The upper bound is tight in the zero-temperature limit when every newly aware
agent selects \(r\). If revelation causes no new adoption, then \(P_r=0\), so
visibility alone is not sufficient.
\end{proposition}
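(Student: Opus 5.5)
The first display is a one-variable calculus statement, so the plan is to apply the mean value theorem to \(\phi(y)=c\,y^\rho\) on \([x,x+\delta]\), assuming \(x\ge 0\) and \(\delta\ge 0\). This gives
\[
  P_r=\phi(x+\delta)-\phi(x)=c\rho\,\xi^{\rho-1}\delta
\]
for some \(\xi\in(x,x+\delta)\). For \(\rho\ge 1\), the map \(y\mapsto y^{\rho-1}\) is nondecreasing on \([0,\infty)\). Replacing \(\xi\) by \(x\) and then by \(x+\delta\) therefore yields the lower and upper bounds. The case \(\delta=0\) is trivial, and for \(\rho=1\) all three quantities coincide. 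Equivalently, one can use convexity of \(\phi\): its secant slope lies between the tangent slopes at the two endpoints.

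Next I will combine this with the sensitivity proposition. That result gives \(\delta\le\|x^\star(\mathcal{A}')-x^\star(\mathcal{A})\|_\infty\le (q/n)/(1-L)\). Write \(\Delta=(q/n)/(1-L)\). The upper bound \(c\rho(x+\delta)^{\rho-1}\delta\) is nondecreasing in \(\delta\) when \(\rho\ge1\), since it is a product of nonnegative nondecreasing factors. Substituting \(\delta\le\Delta\) then gives the combined bound. Alternatively, monotonicity of \(\phi\) gives \(P_r\le \phi(x+\Delta)-\phi(x)\), and the first step applied with \(\Delta\) in place of \(\delta\) gives the same result.

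For the closing remarks, the no-adoption case is immediate: \(\delta=0\) gives \(P_r=\phi(x)-\phi(x)=0\). This holds no matter how many agents became aware, so visibility alone is not sufficient.

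For tightness I will argue in the zero-temperature limit and make the chain of inequalities collapse in two places. First, if every newly aware agent switches to \(r\), then the focal share rises by exactly \(q/n\), so the "at most \(q/n\)" direct-effect bound in the sensitivity proposition is attained. Second, I need the \(1-L\) amplification to be attained, or at least \(L\to0\), so that \(\delta\) equals the right-hand side \(\Delta\). I also need the mean-value point \(\xi\) to move to the upper endpoint. For \(\rho>1\) this does not happen for fixed \(\delta>0\), so "tight" has to be read asymptotically: as \(\delta/x\to0\) the two MVT bounds agree to first order, or one uses \(\rho=1\), where the bound is exact.

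The hardest step is this tightness claim. As \(\tau\to0\) the contraction constant \(L\le\max_a c_a\rho/(2\tau)\) blows up, so the sensitivity bound is no longer directly available. My first attempt will be a small explicit instance in which the response of the newly aware agents is decoupled from congestion, for example \(\eta\) or the intrinsic value \(v_r\) large enough that \(r\) is strictly best for them. In that instance the revealed mass moves exactly \(q/n\) onto \(r\), with effectively \(L=0\) for that block. I will then check that equality holds up to the MVT gap, which vanishes for \(\rho=1\) or for small \(q/n\).
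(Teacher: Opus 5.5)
Your proposal is correct and follows the paper's proof essentially step for step: the mean-value theorem for \(\xi\mapsto c\xi^\rho\), whose derivative is nondecreasing, gives the two-sided bound; substituting \(\delta\le (q/n)/(1-L)\) from the sensitivity proposition (which you rightly justify by monotonicity in \(\delta\)) gives the combined estimate; and \(\delta=0\) gives \(P_r=0\). On tightness, the paper only argues that \(\delta=q/n\) in the deterministic limit, so the bound is attained ``up to the contraction slack,'' and it does not address the strict mean-value gap for \(\rho>1\) or the behavior of \(L\) as \(\tau\to 0\); your more cautious reading (exact for \(\rho=1\) or in a saturated instance with \(L\to 0\), only asymptotic otherwise) is the defensible version of that claim.
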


\begin{proof}
The exact expression follows by subtracting post- and pre-revelation congestion
payoffs. The two inequalities are the mean-value theorem applied to
\(\xi\mapsto c\xi^\rho\), whose derivative is increasing for \(\rho\ge 1\).
The sensitivity substitution gives the displayed bound. In the deterministic
limit, \(\delta=q/n\) whenever all newly aware agents choose the revealed action,
so the bound is attained up to the contraction slack; if \(\delta=0\), the
exact expression gives zero.
\end{proof}

\begin{proposition}[Dynamic awareness is not reducible to a static count game]
Assume \(\eta>0\) or at least one agent's feasible set changes under
revelation. There is no static congestion game whose payoffs and logit response
depend only on current action counts and fixed feasible action sets that
matches EHMG one-step behavior for all histories. In particular, two EHMG
histories can have the same realized count vector \(x\) but different awareness
profiles \(\mathcal{A}\neq\mathcal{A}'\), and hence different next-period logit
responses after a revelation operator.
\end{proposition}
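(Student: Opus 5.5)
The plan is a separation argument: exhibit two EHMG histories that any static count game must identify but that EHMG separates. A static congestion game in the stated class has a one-step response of the form \(G(x;\mathcal{F})\), where \(x\) is the current count vector and \(\mathcal{F}\) is a feasible-set profile fixed once and for all. I will construct two histories \(H\) and \(H'\) that both end at the same realized count vector \(x\) (and, since horizon averages enter \(\widehat n_{i,a}\), the same count trajectory over the last \(\max_i h_i\) rounds). Their terminal awareness profiles will differ, \(\mathcal{A}\neq\mathcal{A}'\). I will then show that, after applying a common revelation operator, EHMG produces different next-period logit responses. A static game would have to assign one output to the input \(x\), so it cannot match both.

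\emph{Construction.} Take a focal action \(r\) and two agents \(i\ne j\). In history \(H\), \(r\in A_i\) and \(r\notin A_j\); in \(H'\) the roles are swapped, or more simply \(r\notin A_i\) in \(H'\). Every agent in both histories plays a common action \(b\ne r\) throughout, so the count vectors coincide at every round. This is possible because awareness of \(r\) does not force adoption of \(r\). Under logit, choices are random, so I will phrase the claim in terms of response distributions given histories rather than realized counts: the histories are taken with identical realized counts, which occur with positive probability.

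\emph{Case split matching the hypothesis.} First suppose some feasible set changes. Apply a private revelation \(R^r_{\{i\}}\) to \(H'\) only, or equivalently compare the unrevealed profiles directly. Agent \(i\) then has \(\Pr(a_i=r)>0\) under one profile and \(=0\) under the other. The static game's response to \(x\) is a single vector, a contradiction. Second suppose \(\eta>0\) while feasible sets are held equal. Then let the awareness shares \(\pi_r\) differ between the two profiles by giving \(r\) to a different number of non-focal agents, chosen so that agent \(i\)'s own mask is the same in both. The term \(\eta(1-\pi_r)\) shifts \(\widehat u_{i,r}\), and by the odds identity of the correlated-disclosure proposition the odds of \(r\) against \(b\) differ by \(\exp(\eta\Delta\pi/\tau)\ne 1\).

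\emph{Main obstacle.} The delicate point is the quantifier ``for all histories.'' A static game could try to encode awareness through an enlarged fixed feasible set plus payoffs. But its payoffs and responses depend only on counts and on feasible sets that never change. So any map from \(x\) to responses is constant across \(H\) and \(H'\), while EHMG's differs. I will state this explicitly as an injectivity failure: the projection \((x,\mathcal{A})\mapsto x\) forgets \(\mathcal{A}\), and EHMG's one-step map is not constant on its fibres.
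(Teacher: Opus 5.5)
Your proposal is correct and follows the paper's own argument. A static count game's one-step response is a single function of the current counts, whereas EHMG responds differently on two count-equivalent histories, either through the mask of an agent whose awareness differs or through the scarcity term \(\eta(1-\pi_r)\). Your extra care — matching the count trajectory over the whole horizon window so the \(\widehat n_{i,a}\) terms coincide, and having everyone play \(b\) so each agent's own action also matches — tightens the paper's terse proof without changing the route.
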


\begin{proof}
A static count-based congestion game assigns the same utilities and the same
logit response to any two histories with the same current count vector. EHMG
does not: the perceived utility contains \(\eta(1-\pi_r)\), and the action mask
\(\mathbf{1}\{r\in A_i\}\) changes when revelation occurs. Thus two histories
with identical realized counts but different \(\pi_r\) or different agent masks
generate different response probabilities for \(r\). A fixed smaller action
space can match one awareness profile, but it cannot simultaneously match both
profiles and the transition between them.
\end{proof}

\begin{proposition}[Count-equivalent histories can have different next choices]
Let \(N=\{1,2\}\), \(A=\{b,r\}\), \(v_b=v_r=0\), \(c=0\),
\(\eta>0\), and temperature \(\tau>0\). Consider two histories with the same
current count vector \(n_b=2,n_r=0\). In history \(H\),
\(A_1=\{b,r\}\) and \(A_2=\{b\}\), so \(\pi_r=1/2\). In history \(H'\),
\(A_1=A_2=\{b,r\}\), so \(\pi_r=1\). Then the next-period logit probabilities
for action \(r\) differ:
\[
  \Pr_H(a_1=r)=
  \frac{\exp(\eta/(2\tau))}{1+\exp(\eta/(2\tau))},
  \qquad
  \Pr_H(a_2=r)=0,
\]
whereas
\[
  \Pr_{H'}(a_1=r)=\Pr_{H'}(a_2=r)=\frac{1}{2}.
\]
Thus a model observing only the current count vector cannot match EHMG
transition kernels across awareness profiles.
\end{proposition}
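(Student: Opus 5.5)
The plan is a direct substitution into the perceived-utility formula and the masked logit rule. The only state variables that differ between $H$ and $H'$ are the awareness masks and the realized awareness share $\pi_r$. Since $c=0$, the congestion term $-c\,\widehat n_{i,a}/n$ vanishes for every agent and action. This removes any dependence on the horizon-averaged counts $\widehat n_{i,a}$, so the common count vector $n_b=2,n_r=0$ plays no role. I will also take the desire terms to be zero ($\lambda_i d_{i,a}=0$), as the statement implicitly requires for the displayed probabilities. The perceived utilities then reduce to $\widehat u_{i,b}=\eta(1-\pi_b)$ and $\widehat u_{i,r}=\eta(1-\pi_r)$ on feasible actions, and $-\infty$ otherwise.

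\emph{History $H$.} First compute the shares: $\pi_b=1$, since both agents are aware of $b$, and $\pi_r=1/2$. This gives $\widehat u_{1,b}=0$ and $\widehat u_{1,r}=\eta/2$. The logit rule over $A_1=\{b,r\}$ then yields $\Pr_H(a_1=r)=\exp(\eta/(2\tau))/(1+\exp(\eta/(2\tau)))$. For agent 2, $r\notin A_2$, so $r$ has perceived utility $-\infty$. The masked softmax then assigns it probability $0$ and puts all mass on $b$.

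\emph{History $H'$.} Here $\pi_b=\pi_r=1$, so both feasible utilities equal $0$ for each agent. The logit rule gives $1/2$ to each action.

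Finally, I note that $\eta>0$ and $\tau>0$ make $\exp(\eta/(2\tau))/(1+\exp(\eta/(2\tau)))>1/2$, so agent 1's probability differs across histories. Agent 2's probability also differs ($0$ versus $1/2$). The concluding sentence then follows by the argument of the preceding non-reducibility proposition. Any model whose transition kernel is a function of the current count vector alone must assign identical next-period laws to $H$ and $H'$. The computed laws differ, so no such model matches the EHMG kernels.

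The only delicate step is bookkeeping of conventions rather than any estimate. I must fix that the scarcity term uses the realized share $\pi_a$ (not a perceived $\widehat\pi_{i,a}$), and that desire terms are zero. If the information-structure version were intended instead, I would set $\widehat\pi_{i,r}=\pi_r$ in both histories (the natural public-knowledge specification), and the same computation goes through.
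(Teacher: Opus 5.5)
Your proposal is correct and matches the paper's proof: both substitute the scarcity bonus $\eta(1-\pi_r)$ (equal to $\eta/2$ in $H$ and $0$ in $H'$) and the awareness mask into the logit formula. Your extra bookkeeping makes explicit three assumptions that the paper's proof leaves implicit but also needs: $c=0$ eliminates the horizon-averaged counts, $\pi_b=1$ gives $b$ zero bonus, and the desire terms are zero.
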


\begin{proof}
Both histories have the same realized counts and the same intrinsic action
values. They differ only in awareness. In \(H\), agent 1 receives scarcity
bonus \(\eta(1-\pi_r)=\eta/2\) for \(r\), while agent 2 cannot choose \(r\). In
\(H'\), both agents can choose both actions and the scarcity bonus is zero.
Substituting these utilities into the logit formula gives the stated
probabilities.
\end{proof}

\begin{proposition}[Fixed-awareness EHMG is a congestion potential game]
Fix awareness sets \(A_i\) and let payoffs be
\(u_i(a)=v_{a_i}-g_{a_i}(n_{a_i}(a))\), where each \(g_a\) depends only on the
number of agents choosing \(a\). Then the finite EHMG stage game is an exact
potential game on the restricted strategy spaces \(A_i\), with potential
\[
  \Phi(a)=\sum_{b\in A}\sum_{m=1}^{n_b(a)}\left(v_b-g_b(m)\right).
\]
\end{proposition}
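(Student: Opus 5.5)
This is Rosenthal's argument carried out on the restricted strategy spaces. The goal is the exact-potential identity
\[
  u_i(a_i',a_{-i})-u_i(a)=\Phi(a_i',a_{-i})-\Phi(a)
\]
for every profile \(a\in\prod_i A_i\), every agent \(i\), and every deviation \(a_i'\in A_i\).

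\emph{Trivial case.} If \(a_i'=a_i\), both sides vanish.

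\emph{Main case.} Write \(s=a_i\) and \(s'=a_i'\ne s\), and let \(a'=(a_i',a_{-i})\). The unilateral switch changes only two counts:
\[
  n_s(a')=n_s(a)-1,\qquad n_{s'}(a')=n_{s'}(a)+1,
\]
and every other \(n_b\) is unchanged. Since \(\Phi\) is a sum over actions of partial sums indexed by \(n_b\), only the \(b=s\) and \(b=s'\) terms move.
\begin{itemize}
\item The \(s\)-sum loses its last term, \(v_s-g_s(n_s(a))\).
\item The \(s'\)-sum gains the term \(v_{s'}-g_{s'}(n_{s'}(a)+1)\).
\end{itemize}
Hence
\[
  \Phi(a')-\Phi(a)=\bigl(v_{s'}-g_{s'}(n_{s'}(a)+1)\bigr)-\bigl(v_s-g_s(n_s(a))\bigr).
\]
On the payoff side, \(u_i(a)=v_s-g_s(n_s(a))\) and \(u_i(a')=v_{s'}-g_{s'}(n_{s'}(a)+1)\), because agent \(i\) joins \(s'\) and raises its count by one. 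The two differences agree, which proves the identity.

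\emph{Role of awareness.} The awareness sets enter only by restricting which profiles and deviations are admissible, i.e.\ \(a_i'\in A_i\). The identity holds pointwise on the full product \(A^n\), so it holds on the subproduct \(\prod_i A_i\). There is no need for \(\Phi\) to depend on \(\mathcal{A}\): actions nobody can choose have \(n_b=0\) and contribute empty sums.

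\emph{Main obstacle.} The only delicate point is the bookkeeping of which summand is added or removed, i.e.\ the index \(n_{s'}+1\) versus \(n_s\) in the two partial sums. This depends on \(g_b\) depending only on the count, not on player identity. The potential is exact even though each \(g_b\) is evaluated at the integer count rather than the share. The linear or \(\rho\)-power form \(c(m/n)^\rho\) is therefore a special case.

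\emph{Corollary.} Since \(\prod_iA_i\) is finite, a maximizer of \(\Phi\) exists and is a pure Nash equilibrium of the fixed-awareness stage game.
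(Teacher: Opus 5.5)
Your proof is correct and follows essentially the same route as the paper. The paper also tracks a unilateral switch and notes that only the two affected partial sums of \(\Phi\) change: the sum for the old action loses \(v_x-g_x(n_x)\) and the sum for the new action gains \(v_y-g_y(n_y+1)\). It matches this to the deviator's payoff change and observes that awareness only restricts which deviations are feasible. Your trivial-case check and the pure-equilibrium corollary are harmless additions.
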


\begin{proof}
Consider a unilateral deviation by agent \(i\) from \(x\) to \(y\). Only counts
on \(x\) and \(y\) change. The change in \(\Phi\) removes the marginal term
\(v_x-g_x(n_x)\) and adds \(v_y-g_y(n_y+1)\), exactly matching the deviator's
payoff change. Awareness only restricts which deviations are feasible; it does
not alter the potential identity.
\end{proof}

\begin{proposition}[Stationary logit revision under fixed awareness]
For fixed awareness sets and asynchronous logit revision with temperature
\(\tau>0\), the induced Markov chain over feasible action profiles has
stationary distribution
\[
  \mu_\tau(a)=\frac{\exp(\Phi(a)/\tau)}
  {\sum_{a'}\exp(\Phi(a')/\tau)}.
\]
Thus fixed-awareness EHMG inherits the standard regularized equilibrium
structure of finite potential games.
\end{proposition}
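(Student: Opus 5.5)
The plan is to verify detailed balance for the asynchronous logit revision chain, using the exact potential identity from the preceding proposition (fixed-awareness EHMG is a congestion potential game). I will fix the revision protocol as follows. At each step one agent \(i\) is drawn uniformly at random, or more generally with a state-independent probability \(q_i>0\). That agent resamples its action from \(A_i\) with probability proportional to \(\exp(u_i(a_i',a_{-i})/\tau)\), where \(u_i\) is the realized fixed-awareness payoff \(v_{a_i}-g_{a_i}(n_{a_i})\). Here I take perceived utility to coincide with realized payoff; the scarcity and desire terms are constant within a fixed awareness profile, and any agent-action-specific constant would have to be absorbed into \(\Phi\), which is a point to state explicitly. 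The state space is the finite product \(\prod_i A_i\) of feasible profiles. Each \(A_i\) is nonempty, so the chain is well defined.

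The key step is detailed balance. Transitions with positive probability connect only profiles \(a\) and \(a'=(a_i',a_{-i})\) that differ in at most one coordinate. For such a pair,
\[
  P(a,a')=q_i\frac{\exp(u_i(a_i',a_{-i})/\tau)}{Z_i(a_{-i})},
  \qquad
  Z_i(a_{-i})=\sum_{b\in A_i}\exp(u_i(b,a_{-i})/\tau),
\]
and \(Z_i\) is the same for the reverse move \(a'\to a\). Therefore
\[
  \frac{P(a,a')}{P(a',a)}=\exp\!\bigl((u_i(a')-u_i(a))/\tau\bigr)=\exp\!\bigl((\Phi(a')-\Phi(a))/\tau\bigr),
\]
where the last equality is the exact potential property. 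Rearranging gives \(\mu_\tau(a)P(a,a')=\mu_\tau(a')P(a',a)\), so \(\mu_\tau\) is reversible and hence stationary.

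For uniqueness, I will argue irreducibility and aperiodicity. Every feasible single-agent switch has positive probability because \(\tau>0\) makes all logit weights positive. Any feasible profile can therefore reach any other by changing agents one at a time within their \(A_i\). Self-loops have positive probability, which gives aperiodicity. So \(\mu_\tau\) is the unique stationary distribution, and the chain converges to it.

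The main obstacle is the bookkeeping in the potential identity. The previous proposition writes \(\Phi\) with the marginal term \(v_b-g_b(m)\), and its proof checks only the deviation \(x\to y\). I need this to hold exactly, including the \(v\) terms, and the marginal-sum form does achieve that. I will also be careful that the \(\eta\) and \(\lambda_i d_{i,a}\) terms are either held at zero or are action-level constants common to all agents. In the latter case they fold into \(v_b\); otherwise I would assume they vanish under fixed awareness.
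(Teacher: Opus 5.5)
Your proposal is correct and follows essentially the same route as the paper's proof. Both use the exact-potential identity \(\Phi(a')-\Phi(a)=u_i(a')-u_i(a)\) for unilateral moves to verify detailed balance for the Gibbs measure, then get uniqueness from positive probability on every feasible unilateral revision; you additionally make explicit the state-independent reviser choice, the cancellation of the common normalizer \(Z_i(a_{-i})\), irreducibility on \(\prod_i A_i\), and that agent-specific desire terms \(\lambda_i d_{i,a_i}\) could be added to \(\Phi\) rather than assumed to vanish.
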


\begin{proof}
In an exact potential game, the logit transition ratio between two profiles
that differ by one agent equals \(\exp((\Phi(a')-\Phi(a))/\tau)\). This is the
detailed-balance condition for the Gibbs measure above. Since the finite chain
has positive probability on all feasible unilateral revisions, the stationary
distribution is unique on the feasible profile graph.
\end{proof}

\begin{proposition}[Visibility can destroy the payoff it reveals]
Suppose action \(a\) has realized payoff \(u_a(n_a)=v_a-c n_a/n\) with \(c>0\).
If a visibility shock increases its realized crowd from \(n_a\) to \(n'_a>n_a\)
while \(v_a\) and \(c\) are fixed, then the action payoff falls by
\[
  u_a(n_a)-u_a(n'_a)=c\frac{n'_a-n_a}{n}>0.
\]
\end{proposition}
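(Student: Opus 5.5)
The plan is a direct substitution into the linear realized-payoff formula, followed by a sign check. No equilibrium machinery is needed: the statement is conditional on the crowd actually rising, so none of the earlier contraction or sensitivity results have to be invoked.

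First, I write the payoff before and after the visibility shock. Because \(v_a\) and \(c\) are held fixed, the only argument that changes is the crowd, so \(u_a(n_a)=v_a-c\,n_a/n\) and \(u_a(n'_a)=v_a-c\,n'_a/n\). Subtracting, the intrinsic values cancel and I obtain
\[
  u_a(n_a)-u_a(n'_a)=\frac{c}{n}\bigl(n'_a-n_a\bigr),
\]
which is the displayed identity.

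Second, I verify strict positivity. This holds because \(c>0\) by assumption, \(n\ge 1\), and \(n'_a-n_a>0\) by the hypothesis that the shock strictly increases the crowd. Since counts are integers, the drop is in fact at least \(c/n\).

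The only point needing care, and it is interpretive rather than technical, is what the statement leaves unaddressed. It does not claim that revelation \emph{must} raise the crowd. That link is exactly what the nonlinear price-of-information proposition treats: if no new adoption occurs, the loss is zero. So I will remark that the conclusion is conditional on \(n'_a>n_a\). I will also note that the same argument extends verbatim to any strictly increasing congestion cost \(g_a\), giving a loss of \(g_a(n'_a)-g_a(n_a)>0\). This connects the result to the definition of inverse abundance: on any interval where visibility raises adoption, the action's realized payoff decreases.
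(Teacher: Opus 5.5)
Your proof is correct and is the same as the paper's: substitute both crowd levels into \(u_a(n_a)=v_a-c\,n_a/n\), subtract so that \(v_a\) cancels, and get strict positivity from \(c>0\) and \(n'_a>n_a\). Your side remarks (the drop is at least \(c/n\) for integer counts, and the argument extends to any strictly increasing congestion cost) are accurate but not needed for the statement.
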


\begin{proof}
Substitute the two crowd levels into the payoff function and subtract. The
strict inequality follows from \(c>0\) and \(n'_a>n_a\).
\end{proof}

\begin{proposition}[A welfare condition for harmful revelation]
Under linear congestion, suppose a visibility expansion moves \(m\) agents from
action \(b\) to a newly revealed action \(r\), with pre-shock counts \(n_b\) and
\(n_r\). Aggregate welfare decreases whenever
\[
  \frac{2c}{n}\bigl(n_r-n_b+m\bigr) > v_r-v_b.
\]
\end{proposition}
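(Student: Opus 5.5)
Aggregate welfare is \(W=\sum_a n_a u_a(n_a)=\sum_a n_a\bigl(v_a-c\,n_a/n\bigr)\), the sum of realized payoffs under the linear congestion rule. The plan is to compute the exact change \(\Delta W\) caused by moving \(m\) agents from \(b\) to \(r\), holding every other action's count fixed, and then compare it with the stated condition. Only the \(b\) and \(r\) terms of \(W\) change, so
\[
  \Delta W=\Bigl[(n_r+m)v_r-\tfrac{c}{n}(n_r+m)^2-n_rv_r+\tfrac{c}{n}n_r^2\Bigr]
  +\Bigl[(n_b-m)v_b-\tfrac{c}{n}(n_b-m)^2-n_bv_b+\tfrac{c}{n}n_b^2\Bigr].
\]
Expanding the squares, this simplifies to
\[
  \Delta W=m(v_r-v_b)-\tfrac{c}{n}\bigl(2mn_r+m^2+m^2-2mn_b\bigr)
  =m\Bigl[(v_r-v_b)-\tfrac{2c}{n}\bigl(n_r-n_b+m\bigr)\Bigr].
\]

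Since \(m>0\) (a nontrivial move), \(\Delta W<0\) holds exactly when \(\tfrac{2c}{n}(n_r-n_b+m)>v_r-v_b\). This is precisely the displayed condition, so the condition is in fact both necessary and sufficient. The statement only asks for sufficiency (``whenever''), and that direction is what I would record.

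I expect the main difficulty to be interpretive rather than computational. First, one must fix that ``aggregate welfare'' means the sum of realized payoffs, not the sum of perceived utilities. The scarcity bonus \(\eta\) and the desire term are excluded because they are choice-side terms, while welfare is measured on realized payoffs \(u_i\). Second, one must check that the shock changes only the \(b\) and \(r\) counts. The hypothesis that exactly \(m\) agents migrate from \(b\) to \(r\) gives this.

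The word ``newly revealed'' might suggest \(n_r=0\), but the formula carries \(n_r\) generally and the algebra does not need \(n_r=0\). As a sanity check, I would verify that when \(v_r=v_b\) and \(m\ge 1\) the condition reduces to \(n_r+m>n_b\): moving agents onto a route that becomes more crowded than the one they left lowers total welfare. This matches the crowding logic of the earlier proposition that visibility can destroy the payoff it reveals.
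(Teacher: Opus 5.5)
Your proposal is correct and is essentially the paper's proof. Both write \(W=\sum_a n_a v_a-\frac{c}{n}\sum_a n_a^2\), keep only the \(b\) and \(r\) terms, reduce to \(\Delta W=m(v_r-v_b)-\frac{2cm}{n}(n_r-n_b+m)\), and conclude that for \(m>0\) welfare falls exactly when the stated inequality holds.
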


\begin{proof}
Total welfare is \(W=\sum_a n_a v_a-\frac{c}{n}\sum_a n_a^2\). The move changes
only actions \(b\) and \(r\). Hence
\[
  \Delta W
  =
  m(v_r-v_b)
  -
  \frac{c}{n}\left[(n_r+m)^2-n_r^2+(n_b-m)^2-n_b^2\right],
\]
which simplifies to
\[
  \Delta W
  =
  m(v_r-v_b)
  -
  \frac{2cm}{n}\bigl(n_r-n_b+m\bigr).
\]
For \(m>0\), \(\Delta W<0\) exactly under the stated condition.
\end{proof}

\begin{proposition}[Monotone response informational-loss condition]
Let a response rule assign a weakly larger probability to action \(r\) whenever
its perceived utility increases and all other perceived utilities are fixed. If
a public revelation of \(r\) weakly increases its perceived utility for at
least one newly aware agent and strictly increases expected adoption of \(r\),
then every strictly decreasing realized payoff \(u_r(n_r)\) yields a strictly
lower expected payoff for action \(r\) after revelation.
\end{proposition}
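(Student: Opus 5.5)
The plan is a coupling argument that turns the per-agent monotonicity of the response rule into first-order stochastic dominance of the crowd on \(r\). Because \(u_r\) is strictly decreasing, stochastic dominance then gives a strict drop in the expected payoff. Write \(p_i\) and \(p'_i\) for agent \(i\)'s probability of choosing \(r\) before and after the public revelation \(R^r_N\). Choices are made independently through the quantal response, given perceived utilities. So the pre- and post-revelation crowds are \(n_r=\sum_i B_i\) and \(n'_r=\sum_i B'_i\), with independent Bernoulli variables \(B_i\sim\mathrm{Bern}(p_i)\) and \(B'_i\sim\mathrm{Bern}(p'_i)\).

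The first step is to show \(p'_i\ge p_i\) for every agent \(i\).
\begin{itemize}
  \item For a newly aware agent, \(p_i=0\) because \(r\notin A_i\) gives perceived utility \(-\infty\). Hence \(p'_i\ge p_i\) trivially, and the hypothesised weak increase in perceived utility is consistent with this.
  \item For agents already aware of \(r\), the monotone response hypothesis gives \(p'_i\ge p_i\), provided their perceived utility for \(r\) does not fall while the other utilities stay fixed.
\end{itemize}

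The second step couples the two crowds on one probability space using a single uniform \(U_i\) per agent: set \(B_i=\mathbf 1\{U_i\le p_i\}\) and \(B'_i=\mathbf 1\{U_i\le p'_i\}\). Then \(n'_r\ge n_r\) pointwise. The hypothesis that expected adoption strictly increases, \(\sum_i p'_i>\sum_i p_i\), means \(\Pr(n'_r>n_r)>0\). Since \(u_r\) is strictly decreasing, \(u_r(n'_r)\le u_r(n_r)\) pointwise, with strict inequality on an event of positive probability. Taking expectations gives \(\mathbb E[u_r(n'_r)]<\mathbb E[u_r(n_r)]\), which is the claim. In the deterministic linear case this recovers the earlier proposition that visibility can destroy the payoff it reveals.

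The main obstacle is the first step for previously aware agents. Under public revelation \(\pi_r\) jumps to \(1\), so the scarcity term \(\eta(1-\pi_r)\) falls for every agent, and an agent already aware of \(r\) could lower \(p_i\). Then \(n'_r\) need not dominate \(n_r\), and merely increasing the mean does not force a lower expectation of a nonlinear decreasing \(u_r\). I would first read the statement's hypothesis as asserting that perceived utility for \(r\) is weakly increased for the agents whose choice of \(r\) changes. If \(\eta>0\) makes that fail, I would state the result with the explicit proviso \(p'_i\ge p_i\) for all \(i\), as the monotone-response premise supplies when other utilities are fixed. Alternatively, for linear \(u_r\) only the mean matters, so the conclusion follows directly from \(\mathbb E n'_r>\mathbb E n_r\) without any coupling.
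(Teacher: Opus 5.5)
Your argument is the paper's argument. The paper's proof also asserts that the post-revelation count of \(r\) first-order stochastically dominates the pre-revelation count, strictly with positive probability, and then takes expectations against the strictly decreasing \(u_r\). Your uniform coupling supplies the details it omits, including why \(\sum_i p'_i>\sum_i p_i\) forces \(\Pr(n'_r>n_r)>0\).

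The obstacle you raise is real, and it is a defect of the statement rather than of your proof. The paper derives dominance ``from the response assumption.'' But the hypotheses constrain only newly aware agents. Public revelation sets \(\pi_r=1\), which strips the bonus \(\eta(1-\pi_r)\) from every previously aware agent and lowers their logit probability of \(r\).

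The paper's own two-agent example is a counterexample. Passing from history \(H\) to \(H'\) is exactly \(R^r_N\). Take \(e^{\eta/(2\tau)}=9\).
\begin{itemize}
\item Expected adoption rises from \(0.9\) to \(1\), and agent 2's utility for \(r\) rises from \(-\infty\), so all hypotheses hold.
\item The law of \(n_r\) moves from \((0.1,0.9)\) on \(\{0,1\}\) to \((1/4,1/2,1/4)\) on \(\{0,1,2\}\), so \(\Pr(n_r\ge 1)\) drops from \(0.9\) to \(0.75\).
\item Take the strictly decreasing payoff \(u_r(0)=1\), \(u_r(1)=\delta\), \(u_r(2)=0\). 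Then \(\mathbb{E}[u_r(n_r)]\), the quantity both you and the paper use, rises from \(0.1+0.9\delta\) to \(0.25+0.5\delta\) whenever \(\delta<0.375\).
\item The same reversal occurs for the paper's payoff family \(v_r-c(n_r/2)^\rho\) with \(c>0\) whenever \(\rho<\log_2 1.6\).
\end{itemize}

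So one of your repairs is genuinely needed. The first is the proviso \(p'_i\ge p_i\) for all \(i\), for instance \(\eta=0\) or no previously aware agents; under it your coupling proof is complete. The second is restricting to linear \(u_r\), where the strict increase in the mean suffices.
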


\begin{proof}
The response assumption implies that the post-revelation adoption count of
\(r\) first-order stochastically dominates the pre-revelation count and is
strictly larger with positive probability. Since \(u_r\) is strictly decreasing
in the count, taking expectations reverses the order.
\end{proof}

\begin{definition}[Price of information]
For focal action \(r\), awareness profile \(\mathcal{A}\), and revelation set
\(S\), define the target price of information as
\[
  P_r(S;\mathcal{A})
  =
  u_r(x_r(\mathcal{A}))
  -
  u_r(x_r(R^r_S(\mathcal{A}))),
\]
where \(x(\mathcal{A})\) is the fixed-awareness mean-field logit equilibrium.
Under linear congestion this is exactly
\[
  P_r(S;\mathcal{A})
  =
  c\left(x_r(R^r_S(\mathcal{A}))-x_r(\mathcal{A})\right).
\]
\end{definition}

\begin{proposition}[Target price and welfare can disagree]
Under linear congestion with unit population mass, let a disclosure move
\(\delta\) mass from action \(b\) to focal action \(r\). The target price is
\(P_r=c\delta\), while welfare changes by
\[
  \Delta W =
  \delta(v_r-v_b)
  -
  c\left[(x_r+\delta)^2-x_r^2+(x_b-\delta)^2-x_b^2\right].
\]
Consequently the target price of information is not an aggregate welfare
measure. With \(c=1\), \(x_r=0.1\), \(x_b=0.9\), \(\delta=0.2\), and
\(v_r-v_b=1\), one obtains \(P_r=0.2\) and \(\Delta W=0.44>0\). With
\(x_r=0.8\), \(x_b=0.2\), \(\delta=0.1\), and \(v_r=v_b\), one obtains
\(P_r=0.1\) and \(\Delta W=-0.14<0\). Finally, from the same baseline
\((x_r,x_b,x_d)=(0.1,0.6,0.3)\) with \(c=1\),
\((v_r,v_b,v_d)=(2.0,1.9,1.0)\), policy \(S\) moving \(0.1\) mass from \(b\)
to \(r\) has \(P_r(S)=0.1\) and \(\Delta W(S)=0.09\), while policy \(T\)
moving \(0.2\) mass from \(d\) to \(r\) has \(P_r(T)=0.2\) and
\(\Delta W(T)=0.20\). Minimizing target price prefers \(S\), but maximizing
aggregate welfare prefers \(T\).
\end{proposition}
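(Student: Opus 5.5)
The plan is to work directly from the definition of welfare in the mean-field, unit-mass normalization. I take aggregate welfare to be \(W(x)=\sum_a x_a\bigl(v_a-c\,x_a\bigr)=\sum_a x_a v_a-c\sum_a x_a^2\), the continuum analogue of the expression used in the earlier welfare proposition for harmful revelation. The target price then follows from the linear-congestion form recorded in the definition of price of information. Since the focal share moves from \(x_r\) to \(x_r+\delta\), we get \(P_r=c\bigl((x_r+\delta)-x_r\bigr)=c\delta\). This is also the \(\rho=1\) case of the nonlinear price-of-information proposition.

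For \(\Delta W\), I use the fact that a disclosure moving \(\delta\) mass from \(b\) to \(r\) changes only the coordinates \(x_b\) and \(x_r\). The linear part therefore changes by \(\delta v_r-\delta v_b\), and the quadratic part changes by \(-c\bigl[(x_r+\delta)^2-x_r^2+(x_b-\delta)^2-x_b^2\bigr]\). Adding these gives the displayed formula, in exact parallel with the finite-population computation earlier. The conceptual point, that \(P_r\) is not a welfare measure, follows once the formula is in hand. \(P_r\) depends only on \(c\) and \(\delta\). \(\Delta W\) additionally depends on the value gap \(v_r-v_b\) and on the baseline shares \(x_r,x_b\), which can be varied independently.

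The remaining work is to check the three numerical instances, each by direct substitution into the two formulas.
\begin{itemize}
\item First instance: the value term is \(0.2\) and the bracket is \((0.09-0.01)+(0.49-0.81)=-0.24\). This gives \(\Delta W=0.44\) while \(P_r=0.2>0\), so a positive target price coexists with a welfare gain.
\item Second instance: the value term vanishes and the bracket is \((0.81-0.64)+(0.01-0.04)=0.14\). This gives \(\Delta W=-0.14\), so the two measures now agree in sign.
\item Third instance, policy \(S\): the value term is \(0.1\cdot 0.1=0.01\) and the bracket is \((0.04-0.01)+(0.25-0.36)=-0.08\), so \(\Delta W(S)=0.09\) and \(P_r(S)=0.1\).
\item Third instance, policy \(T\): the value term is \(0.2\cdot 1.0=0.2\) and the bracket is \((0.09-0.01)+(0.01-0.09)=0\), so \(\Delta W(T)=0.20\) and \(P_r(T)=0.2\).
\end{itemize}
In the third instance the \(\Delta W\) formula must be applied with \(d\) in place of \(b\) for policy \(T\). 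With those values, \(P_r(S)<P_r(T)\) while \(\Delta W(S)<\Delta W(T)\), so the two criteria recommend opposite policies.

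There is no real obstacle in this argument. The only point needing care is fixing the welfare normalization, namely the unit mass and the factor \(c\) rather than \(c/n\). That choice must be consistent with the price-of-information definition so that \(P_r=c\delta\) and the \(\Delta W\) formula use the same units. Beyond that, the remaining risk is arithmetic, which I would double-check instance by instance as above.
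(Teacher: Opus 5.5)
Your proposal is correct and follows the paper's proof. Both take welfare to be \(\sum_a x_a v_a - c\sum_a x_a^2\), obtain \(\Delta W\) by subtracting pre- and post-disclosure welfare (with \(P_r=c\delta\) read off the linear-congestion definition), and check the three instances by direct substitution; your arithmetic, including the implicit swap of \(d\) for \(b\) in policy \(T\), is right.
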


\begin{proof}
The expression follows by subtracting pre- and post-disclosure welfare
\(\sum_a x_av_a-c\sum_a x_a^2\). The numerical claims are direct
substitutions. The examples show that target payoff loss can coexist with
welfare gain, welfare loss, or policy rankings opposite to welfare rankings.
\end{proof}

\begin{definition}[Budgeted welfare-safe disclosure]
Given candidate disclosure groups \(G_1,\ldots,G_m\), budget \(B\), welfare
floor \(\underline W\), and awareness profile \(\mathcal{A}\), choose
\(I\subseteq\{1,\ldots,m\}\) with \(|I|\le B\) to maximize newly aware agents
subject to
\[
  W(R^r_{\cup_{j\in I}G_j}(\mathcal{A}))\ge \underline W.
\]
\end{definition}

\begin{definition}[Minimum harmful revelation]
Given candidate groups and a target price threshold \(\gamma\), find a minimum
cardinality index set \(I\) such that
\[
  P_r(\cup_{j\in I}G_j;\mathcal{A})\ge \gamma.
\]
\end{definition}

\begin{proposition}[Minimum harmful revelation is NP-hard]
Minimum harmful revelation is NP-hard, even when all candidate groups reveal
the same focal action.
\end{proposition}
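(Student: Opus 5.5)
We reduce from Set Cover, which is NP-hard \citep{karp1972reducibility}. Given a universe \(U=\{1,\ldots,n\}\), sets \(T_1,\ldots,T_m\subseteq U\), and an integer \(k\), we ask whether some \(k\) of the sets cover \(U\). The plan is to build an EHMG instance with one agent per universe element and one candidate disclosure group \(G_j=T_j\) per set, all revealing the same focal action \(r\). The construction should make the target price \(P_r(\cup_{j\in I}G_j;\mathcal{A})\) a strictly increasing function of \(|\cup_{j\in I}G_j|\). Then the threshold \(\gamma\) can be set to the value attained exactly at full coverage. Under this design, a minimum-cardinality harmful revelation of size at most \(k\) exists if and only if a set cover of size at most \(k\) exists.

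For the instance, initially every agent is aware only of a baseline action \(b\), so \(A_i=\{b\}\). Take linear congestion, \(v_r=v_b\), and \(\eta=0\). Since \(\eta=0\), the perceived utility of \(r\) does not depend on who else is aware, and each agent's response depends only on its own mask. This removes the strategic coupling through awareness shares.

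Under the Definition of price of information, \(P_r=c\,(x_r(R^r_S(\mathcal{A}))-x_r(\mathcal{A}))\) with \(x_r(\mathcal{A})=0\). So we need the equilibrium share of \(r\) to depend monotonically on \(|S|\) alone. The cleanest choice is to make each newly aware agent's choice independent of the population state. One option is \(c\) small relative to \(\tau\), so the masked-logit contraction proposition gives a unique fixed point, together with a large intrinsic advantage \(v_r-v_b\) so newly aware agents choose \(r\) with probability bounded away from zero. The simpler option is to take \(c\) as the congestion coefficient in realized payoff only and let the choice rule be symmetric across the two actions. If perceived congestion uses the horizon-averaged counts \(\widehat n_{i,a}\), I would fix the horizon history to be all-\(b\) so the perceived inputs are identical across agents. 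In either case every aware agent adopts \(r\) with the same probability \(p^\ast>0\), which gives \(x_r=p^\ast|S|/n\) and \(P_r=cp^\ast|S|/n\).

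Setting \(\gamma=cp^\ast\) then makes \(P_r\ge\gamma\) equivalent to \(S=U\), that is, to \(\{T_j\}_{j\in I}\) covering \(U\). The reduction is polynomial because the number of agents equals \(|U|\), the groups are copied directly, and \(\gamma\) is computable in closed form.

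The main obstacle is ensuring that the equilibrium share \(x_r\) really is a function of \(|S|\) only and is strictly increasing, despite the mean-field feedback through congestion. Congestion couples agents: more adopters of \(r\) lower its perceived utility and push \(p^\ast\) down. The total share remains increasing in \(|S|\) under contraction, but then it is no longer exactly linear. I would first try to sidestep this by using the realized-payoff definition with perceived congestion frozen, as above. If the model insists on perceived congestion feeding back, it suffices to show that \(x_r(S)\) is strictly increasing in \(|S|\) and symmetric in agents. Both follow from the contraction proposition together with exchangeability of agents and a monotone comparative-statics argument on the scalar fixed point for the aware type. With that established, the threshold is still \(\gamma=c\,x_r(U)\), and strict monotonicity gives the equivalence. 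Finally, I would note that the NP-hardness holds even though all groups reveal the same focal action, which is what the statement requires.
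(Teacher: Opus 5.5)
Your reduction is correct, but it runs differently from the paper's.

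\textbf{The paper's route.} The paper also reduces from set cover, with one agent per element and $G_j$ equal to the $j$-th set. It then takes $v_r-v_b$ large and passes to the zero-temperature best-response limit, so every newly aware agent picks $r$. This gives the exact share $x_r(I)=|\cup_{j\in I}\mathcal{S}_j|/n$ and the threshold $\gamma=c$, and it yields partial-cover hardness for any $\rho$ via $\gamma=c(k/n)^\rho$. Positive temperature is handled only by a continuity remark.

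\textbf{Your route.} You work at a fixed $\tau>0$, inside the mean-field logit equilibrium that actually defines $P_r$. You need only that $x_r$ is a strictly increasing function of $|S|$.

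\textbf{The operative argument is your last paragraph.} The ``simpler option'' of letting $c$ act only on realized payoffs, or freezing perceived counts, is not admissible. The reason is that $P_r$ is defined through the fixed-awareness equilibrium, where congestion enters perceived utility. With that feedback $p^\ast$ depends on $|S|$, so $x_r=p^\ast|S|/n$ with a single constant $p^\ast$, and hence $\gamma=cp^\ast$, both fail.

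\textbf{Making your comparative statics exact.} This is easy and needs no contraction hypothesis. Take $\eta=0$, $v_r=v_b$, linear congestion, $s=|S|/n$, and $\lambda_i=0$ (state this explicitly, since you did not). Then the equilibrium share solves $x=s\,\sigma\bigl(c(1-2x)/\tau\bigr)$ with $\sigma(z)=1/(1+e^{-z})$.
\begin{itemize}
\item The right side is strictly decreasing in $x$ and strictly increasing in $s$.
\item Hence the root is unique for all $c,\tau>0$ and strictly increasing in $s$.
\item At $s=1$, symmetry gives $x=1/2$, so the threshold is the explicit rational value $\gamma=c\,x_r(U)=c/2$.
\end{itemize}
Then $P_r(S)\ge c/2$ holds exactly when $S=U$, which is the set-cover condition.

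\textbf{What each route buys.} Yours gives a reduction that is exact at every positive temperature. That is tighter than the paper's limit argument: at $\tau>0$ one has $x_r(U)<1$, so $\gamma=c$ is never reached and must in fact be moved strictly between the values for $n-1$ and $n$ aware agents. The paper's route buys a closed-form price $c\bigl(|\cup_{j\in I}\mathcal{S}_j|/n\bigr)^\rho$ in the coverage count. That makes the partial-cover extension immediate, whereas in your setting the $k$-agent threshold $c\,x_r(k/n)$ is only implicitly defined.
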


\begin{proof}
Reduce from set cover. Let the set-cover universe be
\(U=\{1,\ldots,n\}\) and let the candidate sets be
\(\mathcal{S}_1,\ldots,\mathcal{S}_m\). Build an EHMG instance with one agent
per element of \(U\), one default action \(b\), and one focal action \(r\).
Initially every agent is aware only of \(b\). Candidate disclosure group
\(G_j\) reveals \(r\) exactly to the agents corresponding to
\(\mathcal{S}_j\). Choose \(v_r-v_b\) large enough and temperature small enough
that every newly aware agent strictly selects \(r\) in the deterministic
best-response limit, while unaware agents cannot select \(r\). With
\(g_r(x)=c x^\rho\), revealing a family \(I\) therefore makes the equilibrium
share of \(r\) equal to
\[
  x_r(I)=\frac{|\cup_{j\in I}\mathcal{S}_j|}{n}.
\]
Set the harmful threshold to
\[
  \gamma=c\left(1^\rho-0^\rho\right)=c.
\]
Then \(P_r(I)\ge\gamma\) if and only if
\(\cup_{j\in I}\mathcal{S}_j=U\). Thus a minimum harmful revelation is exactly a
minimum set cover. The construction is polynomial. The same argument with
\(\gamma=c(k/n)^\rho\) gives partial-cover hardness for a \(k\)-agent harmful
threshold. For positive but sufficiently small logit temperature, strict utility
separation preserves the reduction by continuity with a fixed margin.
\end{proof}

\begin{proposition}[A normalized Braess visibility loss]
Consider a unit demand network with edges \(s\to x\) of cost \(f\), \(x\to t\)
of cost \(1\), \(s\to y\) of cost \(1\), and \(y\to t\) of cost \(f\). Without
edge \(x\to y\), the Wardrop equilibrium splits flow equally and has cost
\(1.5\). If the zero-cost edge \(x\to y\) is visible and usable, the Wardrop
equilibrium sends all flow along \(s\to x\to y\to t\) and has cost \(2.0\).
\end{proposition}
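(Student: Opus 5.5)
The plan is to verify both Wardrop equilibria directly from path costs. Here \(f\) denotes the flow on the edge, so \(s\to x\) and \(y\to t\) have cost equal to their load, and the demand is normalized to one. This is the classical Braess network with unit constants.

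\textbf{Network without \(x\to y\).} There are exactly two paths, \(P_1=s\to x\to t\) and \(P_2=s\to y\to t\). With flows \(f_1+f_2=1\), the path costs are \(f_1+1\) and \(1+f_2\). A Wardrop equilibrium with both paths used requires equal costs, which forces \(f_1=f_2=1/2\), giving common cost \(1.5\). I will rule out a corner equilibrium: if all flow used \(P_1\), its cost would be \(2\) while \(P_2\) would cost \(1\), a contradiction, and symmetrically for \(P_2\). Uniqueness also follows from strict convexity of the Beckmann potential \(\int_0^{f_1}\xi\,d\xi+\int_0^{f_2}\xi\,d\xi+f_1+f_2\). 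The fixed-awareness potential-game reduction earlier in the paper is the discrete analogue of this potential argument.

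\textbf{Network with \(x\to y\) visible.} A third path \(P_3=s\to x\to y\to t\) becomes available. The edge loads are \(f_{sx}=f_1+f_3\) and \(f_{yt}=f_2+f_3\), so the path costs are:
\begin{itemize}
\item \(c_1=f_1+f_3+1\),
\item \(c_2=1+f_2+f_3\),
\item \(c_3=(f_1+f_3)+(f_2+f_3)=1+f_3\).
\end{itemize}
At the profile \(f_3=1\), we get \(c_3=2\), and the unused paths have \(c_1=c_2=2\). No path is strictly cheaper than the used one, so this is a Wardrop equilibrium with cost \(2.0\).

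For uniqueness, I will observe that \(c_3\le c_1\) and \(c_3\le c_2\) everywhere, since \(f_2,f_1\le 1\). Suppose \(f_3<1\). Then some flow is on \(P_1\), say, and \(c_3=1+f_3<1+f_3+f_1=c_1\) whenever \(f_1>0\), so that flow would deviate. Hence every equilibrium has \(f_3=1\). Alternatively, strict convexity of the edge-based potential in the edge loads gives uniqueness of the loads directly.

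\textbf{Connection to EHMG and the main obstacle.} The last step is to interpret the result as information-constrained Wardrop. The awareness mask either includes or excludes \(P_3\), and revelation is the operator \(R^r_N\) with \(r=P_3\). The visibility loss is therefore \(2.0-1.5=0.5\). The only delicate point is the uniqueness and corner-case check in the second network: the inequality \(c_3\le c_1,c_2\) must be shown to be strict whenever the alternative path carries positive flow. Everything else is substitution.
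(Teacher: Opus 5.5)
Your proof is correct and follows essentially the same route as the paper: you verify the two flow profiles directly from path costs. It goes slightly further than the paper in two places. You rule out corner equilibria in the two-route network, and you prove uniqueness in the three-route network, whereas the paper's proof only checks that \(x_3=1\) is an equilibrium even though ``the Wardrop equilibrium'' implicitly claims uniqueness.

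One small correction: the reason for \(c_3\le c_1\) and \(c_3\le c_2\) is \(c_1-c_3=f_1\ge 0\) and \(c_2-c_3=f_2\ge 0\), or equivalently that the edge loads \(f_{yt},f_{sx}\) are at most \(1\). It is not that \(f_1,f_2\le 1\). With that identity, the strictness you flag as the delicate point is already immediate.
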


\begin{proof}
Without the middle edge, the two route costs are \(x_1+1\) and \(1+x_2\), with
\(x_1+x_2=1\). Wardrop equilibrium equalizes them at \(x_1=x_2=0.5\), yielding
cost \(1.5\). With the middle edge, the zig-zag route has cost
\((x_1+x_3)+(x_2+x_3)\), where \(x_3\) is zig-zag flow. At \(x_3=1\) and
\(x_1=x_2=0\), all available routes have cost at least \(2\), and the used route
has cost \(2\). No user can improve by deviating, so this is a Wardrop
equilibrium with higher cost.
\end{proof}

\begin{proposition}[EHMG revelation strictly raises information-constrained Wardrop cost]
Embed the normalized Braess network as an EHMG in which routes are actions and
route awareness is the feasible action set. Initially all agents are aware only
of the two outer routes, so the information-constrained Wardrop equilibrium has
cost \(1.5\). A public revelation operator that makes the zero-cost middle edge
common knowledge expands every feasible route set to include the zig-zag route.
The resulting Wardrop equilibrium has cost \(2.0\). Hence EHMG public revelation
strictly increases equilibrium congestion cost by \(0.5\).
\end{proposition}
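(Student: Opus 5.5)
The plan is to reduce the statement to the normalized Braess proposition by identifying EHMG objects with Wardrop objects, and then to check that the revelation operator carries one equilibrium setting to the other. \textbf{Step 1.} Fix the action set \(A=\{P_1,P_2,P_3\}\), where \(P_1=s\to x\to t\), \(P_2=s\to y\to t\) and \(P_3=s\to x\to y\to t\). Use a unit mass of agents with route shares \(x_1,x_2,x_3\). Realized route cost is the sum of edge costs, so \(P_1\) costs \((x_1+x_3)+1\), \(P_2\) costs \(1+(x_2+x_3)\), and \(P_3\) costs \((x_1+x_3)+(x_2+x_3)\). This is still a congestion structure, although edge-based rather than purely action-count based. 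The initial awareness profile sets \(A_i=\{P_1,P_2\}\) for every \(i\). The information-constrained Wardrop equilibrium is then a Wardrop equilibrium on the restricted strategy spaces, which is exactly the deterministic, zero-temperature fixed-awareness limit in which awareness only restricts feasible deviations, as in the potential-game proposition.

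\textbf{Step 2.} Before revelation \(x_3=0\) is forced, because no agent can choose \(P_3\). The first part of the normalized Braess proposition then gives \(x_1=x_2=1/2\) and common cost \(1.5\). I will also note uniqueness: route costs are strictly increasing in their own share, so equalization pins the split down. This makes ``the'' constrained equilibrium well defined.

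\textbf{Step 3.} Apply the public revelation operator \(R^{P_3}_N\). By definition it sets \(A_i=\{P_1,P_2,P_3\}\) for all \(i\) and makes the revelation common knowledge. The feasible sets then coincide with those of the full network containing edge \(x\to y\), so the second part of the normalized Braess proposition gives the equilibrium \(x_3=1\), \(x_1=x_2=0\) with cost \(2.0\). The difference \(2.0-1.5=0.5>0\) gives the claim.

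\textbf{Main obstacle.} The hardest step is showing that the post-revelation cost is \(2.0\) at every equilibrium, not merely at the one exhibited. The Braess proposition only verifies that the all-zig-zag flow is an equilibrium. To close this I will use the Beckmann potential \(\int_0^{x_1+x_3}u\,du+\int_0^{x_2+x_3}u\,du\) over edge flows. This potential is strictly convex in the edge flows \((f_{sx},f_{yt})\), and Wardrop equilibria are exactly its minimizers. Hence the edge flows are unique, which forces every equilibrium to have the same route costs and the same total cost \(2.0\). A quicker alternative is a direct argument: at any equilibrium with \(x_3<1\), route \(P_3\) costs at most as much as any used outer route, which forces \(x_1=x_2=0\). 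If this feels heavy, I will instead state the result for the equilibrium selected by the Braess proposition. I will also remark that with \(\eta>0\) the scarcity bonus vanishes after public revelation, since \(\pi_{P_3}=1\), so it does not perturb the comparison.
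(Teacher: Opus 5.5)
Your Steps 1--3 are essentially the paper's proof. The paper also identifies the initial awareness profile with the two-route subgame, applies \(R^r_N\) to the zig-zag route, and reads off \(1.5\) and \(2.0\) from the normalized Braess proposition on the same physical network.

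Your uniqueness step goes beyond the paper, and it is worth keeping. The paper speaks of ``the'' Wardrop equilibrium of the full network, but the Braess proposition only verifies that the all-zig-zag flow is \emph{an} equilibrium. However, the potential you wrote is wrong. The Beckmann potential sums \(\int_0^{f_e}c_e(u)\,du\) over all edges, and the constant-cost edges \(x\to t\) and \(s\to y\) contribute \(x_1+x_2\). The correct function is therefore \(\tfrac12(x_1+x_3)^2+\tfrac12(x_2+x_3)^2+x_1+x_2\).

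Without the linear term, substituting \(x_3=1-x_1-x_2\) gives \(\tfrac12(1-x_2)^2+\tfrac12(1-x_1)^2\). On the simplex this is minimized at \(x_1=x_2=\tfrac12\), \(x_3=0\), which is the pre-revelation split. That point is not even an equilibrium once \(P_3\) is available, since \(P_3\) costs \(1<1.5\) there. So your stated potential would ``prove'' the wrong answer. With the linear term restored, the gradient in \((x_1,x_2)\) is exactly \((x_1,x_2)\) and the Hessian is the identity. The unique minimizer is then \(x_1=x_2=0\), \(x_3=1\), which gives the cost \(2.0\) at every equilibrium.

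Your direct alternative is cleaner, but state it as an identity rather than a weak inequality. Using \(x_1+x_2+x_3=1\), one has \(c(P_3)=c(P_1)-x_1=c(P_2)-x_2\) at every flow. So any positive \(x_1\) or \(x_2\) puts flow on a route strictly costlier than \(P_3\). ``Costs at most as much'' alone would not force \(x_1=x_2=0\); the strictness is what does the work.
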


\begin{proof}
The initial awareness profile restricts every agent to the two-route subgame,
whose Wardrop equilibrium is the equal split from the previous proposition and
has cost \(1.5\). Public revelation is \(R^r_N\) for the zig-zag route \(r\):
every agent can now choose it and every agent observes that all others can
choose it. The full route set is exactly the Braess network with the middle
edge available, whose Wardrop equilibrium sends all flow through the zig-zag
route at cost \(2.0\). The same physical network and demand are used in both
states; only the awareness-transition operator changes the equilibrium.
\end{proof}

\section{Experiments}

All experiments are generated by \texttt{scripts/materialize\_experiments.py}.
The script writes raw and aggregate CSV files, selected manuscript figures,
auxiliary plot previews, an experiment JSON audit, and a reproducibility
manifest. The source package includes code and result artifacts.

\paragraph{E1: awareness sweep.}
We vary initial awareness of a hidden third action from \(0.03\) to \(1.00\).
Each condition uses 24 deterministic-seed replicates of a 501-agent, 280-round
game. The outcome is the hidden action's late-window payoff and choice rate.

\paragraph{E2: public visibility shock.}
The hidden action begins with \(8\%\) awareness. At round 180, it is revealed to
all agents. We compare desire biases \(0.00\), \(0.20\), and \(0.45\), using 18
replicates per condition. The primary outcomes are hidden-action payoff
collapse, hidden-action choice-rate jump, and welfare change.

\paragraph{E3: horizon--desire grid.}
We cross horizons \(h\in\{1,3,8,20\}\) with desire biases
\(\lambda\in\{0,0.12,0.28,0.45\}\). All agents know all actions, so this
experiment isolates bounded horizon and desire from awareness. The outcome is
late-window volatility of the focal action's choice rate.

\paragraph{E4: Braess information comparison.}
We compute the analytic normalized Braess costs from the Braess visibility
proposition and write the restricted and full-information route flows to CSV.

\paragraph{E5: transferred algorithmic mechanism layer.}
We materialize the general-purpose code layer behind the simulations in
formal mechanism language. The layer includes binary oracle identification
bounds, a k-Clique to compatible-coalition reduction, min-fill mechanism graph
kernelization, bounded-treewidth dynamic-programming work estimates,
risk-plus-ambiguity policy scoring, message-passing repair for binary belief
consistency checks, and a reduced strategic SDE over salience, dispersion, and
delay pressure.

\paragraph{E6: robustness phase diagram.}
To separate structural behavior from parameter-induced behavior, we repeat the
visibility shock with no desire bias and no direct desire term. We vary
temperature \(\tau\in\{0.08,0.16,0.32,0.64\}\), congestion curvature
\(\rho\in\{0.75,1,1.5,2\}\), and learning rule
\(\{\)softmax, epsilon-greedy\(\}\). The outcome is the post-revelation payoff
collapse of the revealed action and welfare change.

\paragraph{E7: disclosure and desire-as-attention.}
We compare no revelation, private partial revelation, correlated partial
disclosure, public common revelation, desire acting through attention/awareness,
and desire acting through utility. This addresses the behavioral-identification
concern: desire can influence which action is represented, not only how a
represented action is valued.

\paragraph{E8: counterfactual baselines.}
We compare the public EHMG shock with ordinary full-awareness congestion,
full-awareness minority dynamics without epistemic scarcity, private partial
revelation, and public revelation without the scarcity bonus. This tests whether
the EHMG formalism predicts behavior absent from standard baselines.

\paragraph{E9: disclosure optimization.}
We instantiate budgeted welfare-safe disclosure and minimum harmful revelation
on a six-group disclosure instance. Exact enumeration, greedy selection, full
public revelation, and the minimum harmful subset are all evaluated under the
same mean-field logit outcome map.

\paragraph{E10: strategic trajectory imaging.}
We encode the public-visibility-shock trace as a line image,
Gramian-angular hidden-rate image, recurrence image, Gramian payoff image,
invertible spectral carrier, and phase-scrambled control.

\paragraph{E11: trace-image recognition and stress benchmark.}
We generate four labeled strategic regimes--ordinary congestion, epistemic
scarcity, public revelation, and attention-mediated discovery--with deterministic
train/test splits over seeds. A nearest-centroid classifier is evaluated on raw
trace features, Fourier features, recurrence-image features, and the full image
bundle. We then run a harder binary stress benchmark: every awareness-cascade
trace is paired with an iterative amplitude-adjusted Fourier-transform (IAAFT)
null that preserves the trace's sorted values and approximately preserves its
Fourier magnitudes. The train and test splits differ in reveal time,
temperature, congestion curvature, horizon distribution, action-value gap,
circular phase, and noise. Raw positions, Fourier magnitudes, a ROCKET-style
random-convolution time-series baseline, recurrence-image summaries, an
image-structure bundle, and a small local-filter baseline trained from scratch
are evaluated. Additional CSV checks vary image resolution, recurrence
threshold, Gramian variant, spectral-carrier construction, and phase scrambling;
metadata-only and marginal-only leakage probes test seed, trace-length,
reveal-time, parameter, and marginal-statistic shortcuts.

\paragraph{E12: external-trace adapter.}
No real external trace dataset is included. Instead, the code provides a
long-format CSV adapter for future route-choice, online-attention,
experimental-minority-game, or gambling-prediction traces, plus a synthetic
external generator with cascade and cyclic dynamics that do not call the EHMG
simulator. The canonical required columns are \texttt{trace\_id},
\texttt{time}, \texttt{signal\_name}, and \texttt{value}; optional columns
record label, group, and metadata. The reproducibility script writes both a
long-format adapter input and a smoke-test summary with
\(\ExternalAdapterTraceCount\) traces.

\section{Results}

The awareness sweep is reported textually rather than as a manuscript figure.
The hidden-action payoff falls with fitted slope \(\AwarenessSlope\), moving
from \(\LowestAwarenessPayoff\) at the lowest awareness level to
\(\FullAwarenessPayoff\) under full awareness. This supports the inverse
abundance claim in its simplest form: the hidden action is not valuable merely
because \(v_C>v_A=v_B\); it is valuable because too few agents can select it.
When awareness approaches one, its crowd increases and its advantage decays.

\begin{figure}[H]
  \centering
  \includegraphics[width=0.82\linewidth]{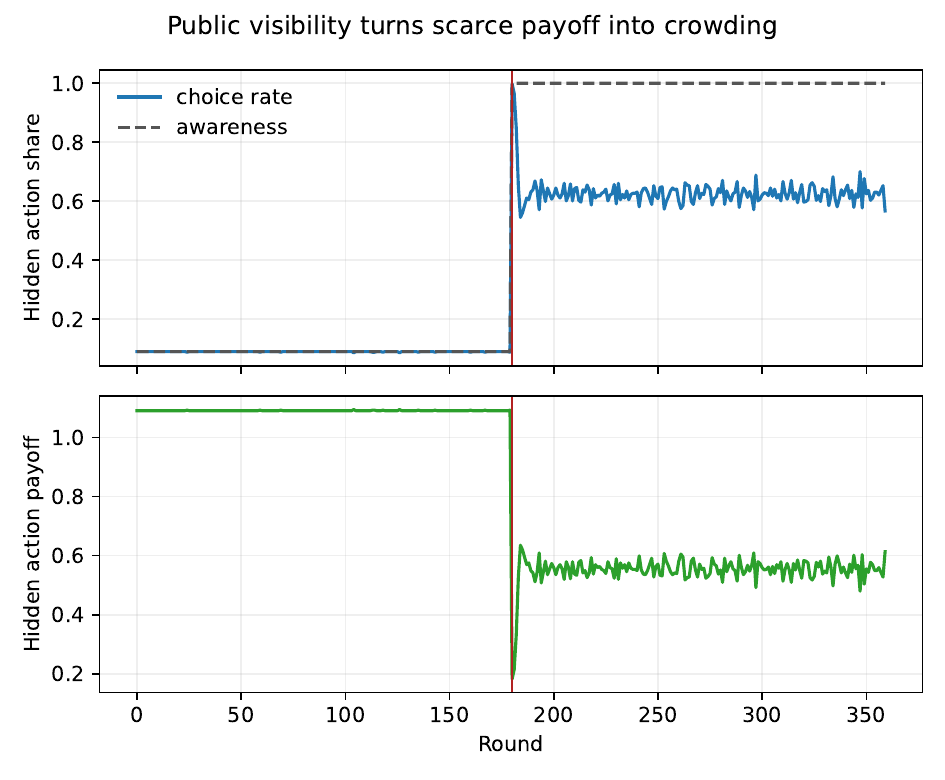}
  \caption{A public revelation shock makes the hidden action common and
  crowded. In the strong desire condition, the hidden-action choice rate rises by
  \(\StrongVisibilityChoiceJump\), while hidden-action payoff falls by
  \(\StrongVisibilityCollapse\).}
  \label{fig:shock}
\end{figure}

Figure~\ref{fig:shock} shows the mechanism dynamically. Before revelation, the
hidden action has low awareness and high realized payoff. After revelation, the
same action is selected by more agents. Desire accelerates the movement toward
the revealed action, converting epistemic scarcity into crowding.

The horizon--desire grid isolates bounded observation and desire without using
another figure: focal-action volatility ranges from \(\MinHorizonVolatility\)
to \(\MaxHorizonVolatility\), and shorter horizons with stronger desire produce
larger oscillations. The normalized Braess comparison is also reported directly:
route-set expansion raises equilibrium cost from \(\BraessRestrictedCost\) to
\(\BraessFullCost\), an increase of \(\BraessCostIncrease\).

The reduced strategic dynamics and algorithmic-transfer checks are retained as
auditable generated artifacts rather than manuscript figures. Their CSV record
reports
\(\OracleBoundFiveHundredOne\) oracle queries for 501 strategic states,
\(\FanoObservationBound\) Fano-limited observations in an eight-class task, a
treewidth reduction from \(\TreewidthBefore\) to \(\TreewidthAfter\), policy
surprise \(\PolicySurprise\), policy expected success
\(\PolicyExpectedSuccess\), and \(\BeliefRepairInconsistency\) remaining
belief-consistency violations.

The robustness and disclosure-mode checks are likewise kept textual. With
desire removed, the revealed action's payoff collapse remains positive in
\(\RobustnessPositiveCollapse\) of raw grid cells, with raw collapses ranging
from \(\RobustnessMinCollapse\) to \(\RobustnessMaxCollapse\). Public common
revelation produces payoff collapse \(\PublicDisclosureCollapse\);
desire-as-attention changes hidden-action choice rate by
\(\AttentionDesireChoiceJump\), while direct desire-as-utility changes it by
\(\UtilityDesireChoiceJump\). These comparisons keep the behavioral
identification claim without adding near-duplicate charts.

The counterfactual baselines are also reported without a separate bar chart.
The EHMG public revelation condition produces hidden-action payoff collapse
\(\CounterfactualEhmgCollapse\), while the ordinary full-awareness congestion
baseline changes by only \(\CounterfactualOrdinaryCollapse\), giving epistemic
gap \(\CounterfactualEpistemicGap\). This addresses the main experimental
criticism: the mechanism is not merely ``more users of a congestible action
lowers its payoff.'' In the ordinary full-awareness baseline there is no hidden
action to reveal, so the same pre/post measurement is essentially flat. Public
revelation without the scarcity bonus still collapses payoff by
\(\CounterfactualNoBonusCollapse\), showing that action-space expansion itself
is doing work.

\paragraph{Formal audit experiments.}
Four CSV-backed audit experiments make the theorem claims easier to inspect.
The non-reducibility witness holds current action counts fixed but changes
awareness masks, producing transition-kernel total variation
\(\AuditNonreducibilityTv\) and aggregate revealed-action probability gap
\(\AuditNonreducibilityAggregateGap\). The masked-logit audit evaluates
\(\AuditContractionCount\) finite cases and records
\(\AuditContractionConservativeCases\) cases where the sufficient scalar bound
fails while the measured local Jacobian remains contractive; the largest
measured local norm is \(\AuditContractionEmpiricalMaxNorm\). The price/welfare
grid includes \(\AuditPriceWelfareUpCases\) welfare-increasing and
\(\AuditPriceWelfareDownCases\) welfare-decreasing target-price cases, with
\(\AuditPriceWelfareOppositePairs\) policy pair where minimizing target price and
maximizing welfare disagree. The same-reach signal example holds recipient
reach fixed while posterior awareness beliefs vary, yielding adoption spread
\(\AuditSignalAdoptionSpread\) and relative-odds range
\([\AuditSignalMinRelativeOdds,\AuditSignalMaxRelativeOdds]\). These audits are
diagnostic checks on the formal distinctions, not substitutes for the proofs.

\begin{figure}[H]
  \centering
  \includegraphics[width=0.82\linewidth]{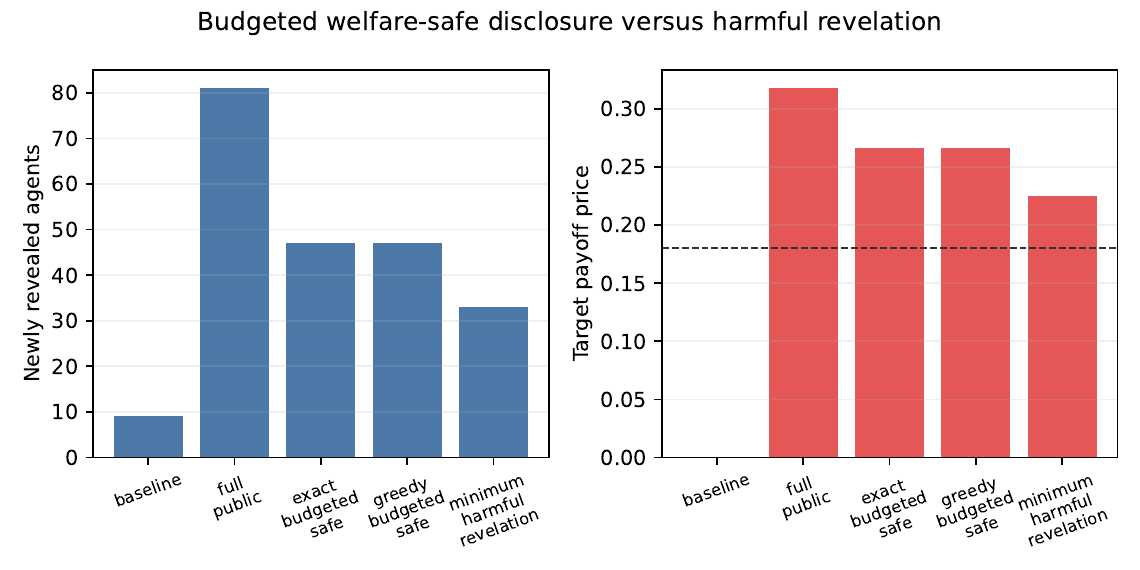}
  \caption{Algorithmic disclosure tasks. Exact budgeted welfare-safe disclosure
  reveals \(\DisclosureExactRevealed\) agents, the greedy policy reveals
  \(\DisclosureGreedyRevealed\), and the greedy gap is
  \(\DisclosureGreedyGap\). Full public revelation has target price
  \(\DisclosurePublicPrice\), while minimum harmful revelation crosses the
  target-price threshold with \(\DisclosureMinHarmfulGroups\) groups and price
  \(\DisclosureMinHarmfulLoss\).}
  \label{fig:optimization}
\end{figure}

\begin{figure}[H]
  \centering
  \includegraphics[width=0.86\linewidth]{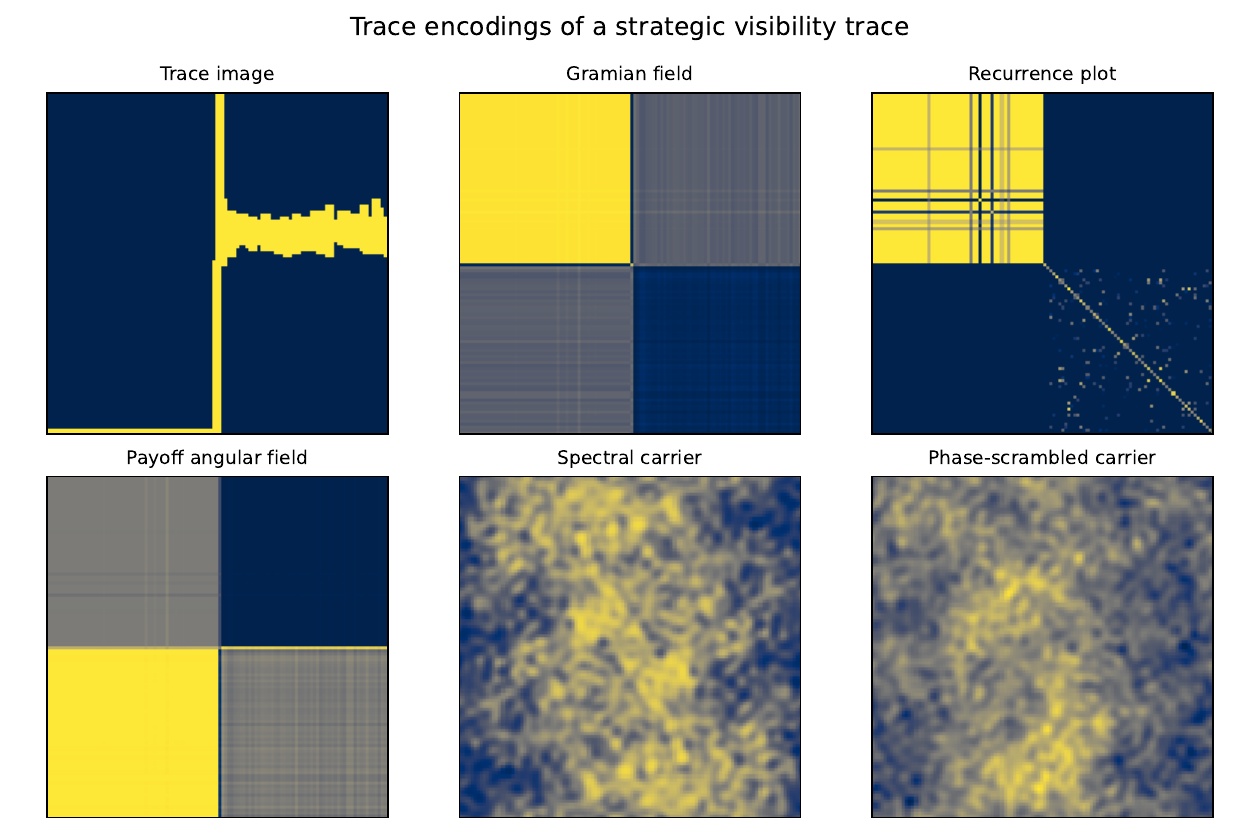}
  \caption{Trace encodings of one strategic trajectory. The six-panel
  diagnostic includes a trace image, Gramian and recurrence images, a payoff
  angular field, an invertible spectral carrier, and a phase-scrambled carrier.
  The spectral carrier reconstructs the z-scored hidden-rate trace at
  \(\TraceSpectralPsnr\) dB after 8-bit quantization; the phase control changes
  local structure by \(\TracePhaseStructureDelta\) while leaving the normalized
  radial power profile close, with relative error \(\TracePowerProfileRelError\).}
  \label{fig:trace}
\end{figure}

Figure~\ref{fig:trace} shows the image representations used by the trace
recognition benchmark. The figure is not the evidence by itself; it records what
the feature pipelines see before the held-out detection task.

The basic four-regime recognition task is reported without a separate bar
chart: over \(\TraceDetectionTestExamples\) held-out traces, raw and Fourier
baselines reach \(\TraceDetectionRawAccuracy\) and
\(\TraceDetectionFourierAccuracy\), recurrence-image features reach
\(\TraceDetectionRecurrenceAccuracy\), and the image bundle reaches
\(\TraceDetectionImageAccuracy\). Because this task is
partially separable from raw trajectories, the stricter matched-null benchmark
is reported textually rather than as another bar chart. The matched null removes
the easy marginal and spectral cues, and the test split changes the event
location and game parameters. Raw trace features reach
\(\TraceStressRawAccuracy\), Fourier magnitudes reach
\(\TraceStressFourierAccuracy\), the ROCKET-style time-series baseline reaches
\(\TraceStressRocketAccuracy\), recurrence-image summaries reach
\(\TraceStressRecurrenceAccuracy\), and the image-structure bundle reaches
\(\TraceStressImageAccuracy\). The small local-filter model reaches
\(\TraceStressConvAccuracy\). The image-encoding gain over the best non-image
baseline, including ROCKET, is \(\TraceStressImageGain\).
The ablation table reports \(\TraceAblationCount\) image-encoding controls, and
the metadata-only leakage probe obtains \(\TraceLeakageMetadataAccuracy\), consistent
with the paired cascade/null construction rather than a seed or parameter
shortcut.

The remaining generated quantities are in the CSV tables and JSON experiment
summary; the manuscript avoids a large catch-all summary table.

\section{Interpretation}

\paragraph{Gambling.}
In a pure lottery, previous independent outcomes do not change the next draw.
Yet gambling decisions are rarely pure computations of objective probability.
They are threshold decisions under arousal, loss framing, pattern search, and
desire. EHMG represents this by separating realized payoff from perceived
utility. Desire does not need to rewrite the true probability; it only needs to
shift a borderline action into the chosen set.

\paragraph{The forest horizon.}
An agent in a forest can feel locally complete while being globally uninformed.
This is not irrationality by itself. It is horizon-bounded rationality: the
agent acts coherently inside the events represented in the local model. EHMG
formalizes this as \(A_i(t)\) and \(h_i\). The world can contain actions and
states that the agent neither observes nor evaluates.

\paragraph{Unobservable domains.}
Claims about domains outside observation should not be modeled as if all agents
share one exhaustive state space and merely disagree about probabilities. In
many cases, agents fill different state spaces. EHMG does not adjudicate
metaphysical truth. It provides a formal language for the fact that
agents can coordinate, disagree, or fragment because their horizons and
awareness sets differ.

\paragraph{Formal epistemic status.}
The present model is a lattice model of action awareness, not a full epistemic
logic with hierarchies of subjective state spaces. Public revelation is modeled
as a lattice jump that is common to the population; private and correlated
revelations are different transition operators on the same lattice. This is
deliberately weaker than a complete game-with-unawareness foundation, but
stronger than treating unawareness as a static omitted action.

\paragraph{Strategic value.}
The slogan ``what is many is little; what is little is many'' becomes precise:
an action's strategic value is a function of intrinsic value, crowding, and
visibility. When a hidden action is rare in awareness, it can yield high payoff.
When it becomes common knowledge, it can lose the very advantage that made it
attractive.

\section{Limitations}

The model is deliberately minimal. It does not estimate parameters from human
subjects, does not claim that desire is the only source of gambling bias, and
does not replace equilibrium analysis in network games. It also does not give a
complete epistemic-logic treatment of awareness hierarchies; the contribution is
the awareness-transition game and disclosure-optimization layer. The simulations
are agent-based stress tests for a proposed mechanism. The trace-encoding
experiments establish trace-to-image structure under controlled synthetic game
traces; they include a lightweight local-filter baseline but do not yet
evaluate large image models, real video, or human-generated strategic
trajectories. Future work should fit the model to experimental minority-game
play, online attention cascades, route choice data, or gambling tasks in which
predictions and likelihood judgments are separately elicited, and should test
whether the matched-null trace-encoding result transfers through the
external-trace adapter.

\section{Conclusion}

Epistemic horizon minority games turn a philosophical intuition into a testable
formal object. Agents choose inside bounded horizons, awareness evolves
on a finite lattice, some actions are valuable because few agents see them,
desire can push marginal choices toward crowding, and common visibility can
erase strategic payoff. The added disclosure tasks make the mechanism
algorithmic: one can ask which revelation policy is safe, which small
revelation is harmful, and which traces reveal the underlying regime. The result
is a compact formal and computational account of why, in strategic
environments, abundance can reduce value and scarcity can expand it.

\FloatBarrier
\bibliographystyle{plainnat}
\bibliography{refs}

\end{document}

%% file: experiment_macros.tex
\newcommand{\AwarenessSlope}{-0.418}
\newcommand{\LowestAwarenessPayoff}{1.152}
\newcommand{\FullAwarenessPayoff}{0.755}
\newcommand{\StrongVisibilityCollapse}{0.545}
\newcommand{\StrongVisibilityChoiceJump}{0.545}
\newcommand{\MaxHorizonVolatility}{0.495}
\newcommand{\MinHorizonVolatility}{0.022}
\newcommand{\BraessRestrictedCost}{1.500}
\newcommand{\BraessFullCost}{2.000}
\newcommand{\BraessCostIncrease}{0.500}
\newcommand{\OracleBoundFiveHundredOne}{9}
\newcommand{\FanoObservationBound}{6}
\newcommand{\TreewidthBefore}{5}
\newcommand{\TreewidthAfter}{2}
\newcommand{\PolicySurprise}{0.664}
\newcommand{\PolicyExpectedSuccess}{0.709}
\newcommand{\BeliefRepairInconsistency}{0}
\newcommand{\RobustnessPositiveCollapse}{1.000}
\newcommand{\RobustnessMinCollapse}{0.141}
\newcommand{\RobustnessMaxCollapse}{0.406}
\newcommand{\PublicDisclosureCollapse}{0.333}
\newcommand{\AttentionDesireChoiceJump}{0.007}
\newcommand{\UtilityDesireChoiceJump}{0.550}

\newcommand{\TraceSpectralPsnr}{59.9}
\newcommand{\TracePhaseStructureDelta}{0.034}
\newcommand{\TracePowerProfileRelError}{0.000}
\newcommand{\CounterfactualEhmgCollapse}{0.340}
\newcommand{\CounterfactualOrdinaryCollapse}{0.000}
\newcommand{\CounterfactualNoBonusCollapse}{0.335}
\newcommand{\CounterfactualEpistemicGap}{0.339}
\newcommand{\DisclosureExactRevealed}{47}
\newcommand{\DisclosureGreedyRevealed}{47}
\newcommand{\DisclosureGreedyGap}{0}
\newcommand{\DisclosurePublicPrice}{0.318}
\newcommand{\DisclosureMinHarmfulGroups}{2}
\newcommand{\DisclosureMinHarmfulLoss}{0.225}
\newcommand{\TraceDetectionRawAccuracy}{1.000}
\newcommand{\TraceDetectionFourierAccuracy}{1.000}
\newcommand{\TraceDetectionRecurrenceAccuracy}{1.000}
\newcommand{\TraceDetectionImageAccuracy}{1.000}

\newcommand{\TraceDetectionTestExamples}{40}
\newcommand{\TraceStressRawAccuracy}{0.542}
\newcommand{\TraceStressFourierAccuracy}{0.479}
\newcommand{\TraceStressRocketAccuracy}{0.583}
\newcommand{\TraceStressRecurrenceAccuracy}{0.604}
\newcommand{\TraceStressImageAccuracy}{0.958}
\newcommand{\TraceStressConvAccuracy}{0.604}

\newcommand{\TraceStressImageGain}{0.375}

\newcommand{\TraceLeakageMetadataAccuracy}{0.500}
\newcommand{\TraceAblationCount}{7}
\newcommand{\ExternalAdapterTraceCount}{16}
\newcommand{\AuditNonreducibilityTv}{0.250}
\newcommand{\AuditNonreducibilityAggregateGap}{0.134}
\newcommand{\AuditContractionCount}{48}
\newcommand{\AuditContractionConservativeCases}{10}
\newcommand{\AuditContractionEmpiricalMaxNorm}{4.067}
\newcommand{\AuditPriceWelfareUpCases}{3}
\newcommand{\AuditPriceWelfareDownCases}{1}
\newcommand{\AuditPriceWelfareOppositePairs}{1}
\newcommand{\AuditSignalAdoptionSpread}{0.284}
\newcommand{\AuditSignalMaxRelativeOdds}{1.822}
\newcommand{\AuditSignalMinRelativeOdds}{0.407}